\newtheorem{theorem}{Theorem}[section]
\theoremstyle{definition}
\newtheorem{definition}[theorem]{Definition}
\newtheorem{claim}[theorem]{Claim}
\newtheorem{consequence}[theorem]{Consequence}
\newtheorem{proposition}[theorem]{Proposition}
\newtheorem{example}[theorem]{Example}
\theoremstyle{remark}
\newtheorem{remark}[theorem]{Remark}
\numberwithin{equation}{section}
\newcommand{\abs}[1]{\lvert#1\rvert}
\DeclareSymbolFont{UPM}{U}{eur}{m}{n}
\DeclareMathSymbol{\uppartial}{0}{UPM}{"40}
\newcommand{\slfrac}[2]{\left.#1\middle/#2\right.}
\newcommand{\ensemblenombre}[1]{\mathbb{#1}}
\newcommand{\Z}{\ensemblenombre{Z}}
\newcommand{\Q}{\ensemblenombre{Q}}
\newcommand{\R}{\ensemblenombre{R}}
\newcommand{\opd}{\mathop{}\mathopen{}\mathrm{d}}
\newcommand{\opdelta}{\mathop{}\mathopen{}\updelta}
\newcommand{\oppartial}{\mathop{}\mathopen{}\uppartial}
\newcommand{\trace}[1]{\mathrm{Tr}\mathopen{}\left(#1\right)}
\newcommand{\Hom}[2]{\mathrm{Hom}\mathopen{}\left(#1,#2\right)}
\newcommand{\Ext}[2]{\mathrm{Ext}\mathopen{}\left(#1,#2\right)}
\newcommand{\noyau}{\mathop{}\mathopen{}\mathrm{Ker\,}}
\newcommand{\image}{\mathop{}\mathopen{}\mathrm{Im}\,}
\begin{document}

\title{Generalized Abelian Turaev-Viro and $\mathrm{U}\!\left(1\right)$ BF Theories}

\author{Emil H{\o}ssjer}
\address{Faculty of Science, Department of Mathematics and Computer Science, Centre for Quantum Mathematics, Campusvej 55, 5230 Odense, Denmark}
\email{hossjer@imada.sdu.dk}
%
\author{Philippe Mathieu}
\address{Institut f\"ur Mathematik, Universit\"at Z\"urich, Winterthurerstrasse 190, CH-8057 Z\"urich}
\email{philippe.mathieu@math.uzh.ch}
\thanks{Ph. M. was supported by the NSF grant 1947155 and the JTF grant 6152, then by the NSF Grant 200020-192080 of the Simons Collaboration on Global Categorical Symmetries, the COST Action 21109 - Cartan geometry, Lie, Integrable Systems, quantum group Theories for Applications (CaLISTA), and the NCCR SwissMAP, funded by the Swiss National Science Foundation. 
}
%
\author{Frank Thuillier}
\address{Laboratoire d'Annecy de Physique Th\'{e}orique (LAPTh), 9 Chemin de Bellevue, 74940 Annecy, France}
\email{frank.thuillier@lapth.cnrs.fr}
%
\subjclass[2020]{Primary 57K16; Secondary 81T25, 81T27}
\date{January 1, 1994 and, in revised form, June 22, 1994.}


\dedicatory{This paper is dedicated to our colleague and friend {\'E}ric Pilon.}

\keywords{Differential cohomology, quantum invariants of manifolds, quantum field theory, gauge theory, $\mathrm{U}\!\left(1\right)$ BF theory}

\begin{abstract}
We explain how it is possible to study $\mathrm{U}\!\left(1\right)$ BF theory over a connected closed oriented smooth $3$-manifold in the formalism of path integral thanks to Deligne-Beilinson cohomology. We show how we can straightforwardly extend the definition to families of theories in any dimension. We extend then the definition of the Turaev-Viro invariant of a connected closed oriented smooth $3$-manifold in an Abelian framework to a family of invariants in any dimension. We show that those invariants can be written as discrete BF theories. We explain how the extensions of $\mathrm{U}\!\left(1\right)$ BF theory we defined can be related to the extensions of Turaev-Viro invariant we constructed.
\end{abstract}

\maketitle

\tableofcontents

\section{Introduction}

\subsection{A bit of history}

\begin{figure}
\begin{center}
\includegraphics{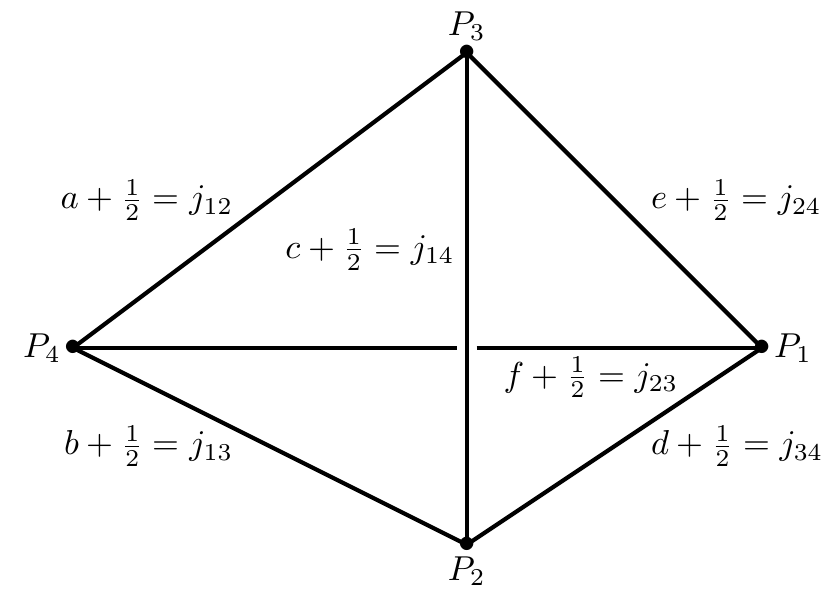}
\caption[]{Tetrahedron associated with the $6j$-symbol $\begin{Bmatrix} a & b & c \\ d & e & f \end{Bmatrix}$.}
\label{6j-Symbol}
\end{center}
\end{figure}

In 1968, Tullio Regge studied with Giorgio Ponzano the large spins limit of the $6j$-symbols defined by Wigner \cite{PR1968}, which have the symmetries of a tetrahedron. On Figure \ref{6j-Symbol}, the labels of the edges of the tetrahedron are indices of $\mathrm{SU}\!\left(2\right)$ representations (the so-called spins). In their work, Ponzano and Regge showed that if we regard those indices as the lengths of the edges of the tetrahedron, then in the large spins limit, the $6j$-symbol associated with the labeling representations is related to the volume $V$ of the tetrahedron according to
\begin{equation}
\begin{Bmatrix}
a & b & c \\
d & e & f
\end{Bmatrix}
\sim\frac{1}{\sqrt{12\pi V}}\cos\left(\sum\limits_{k,l}j_{kl}\theta_{kl}+\frac{\pi}{4}\right),
\end{equation}
$\theta_{kl}$ being the dihedral angles of the tetrahedron.

Studying then the case of $3nj$-symbols, Ponzano and Regge considered triangulated polyhedra, more general than tetrahedra, with edges labeled by indices of $\mathrm{SU}\!\left(2\right)$ representations. Conducting a formal computation consisting of summing over all the possible labels and taking the high spins limit, they recognized the discretization of the partition function of Euclidean General Relativity in dimension $3$ for an empty space that Regge had studied in an earlier work in which he expressed General Relativity without coordinates but using a cellular decomposition of the space-time \cite{REG1961}. Unfortunately, both quantities are ill-defined. Indeed, the sum over all representations of $\mathrm{SU}\!\left(2\right)$ is infinite, and the partition function too, as it consists of a sum over the infinite dimensional space of all the cellular decompositions of space-time.

Those ideas slept nearly 20 years before emerging back in 1992, when Vladimir Turaev and Oleg Viro defined the so-called state-sum invariant of $3$-manifolds, or Turaev-Viro (TV) invariant, using the same idea as Ponzano and Regge of labeled edges of a triangulation of the manifold considered \cite{TV1992}. This construction was then rewritten in the formalism of spherical categories by John Barrett and Bruce Westbury \cite{BW1996}. In this framework, it is possible to construct the invariant with a finite set of objects, according to the so-called ``domination axiom'' that the category has to satisfy. The expression of the TV invariant is then formally very similar to the formula of Ponzano and Regge in the case of a realization of the category by representations of $\mathcal{U}_{q}\!\left(\mathfrak{sl}_{2}\!\left(\mathbb{C}\right)\right)$ (\textit{i.e.} a quantum deformation of the universal enveloping algebra of $\mathfrak{sl}_{2}\!\left(\mathbb{C}\right)$ at a root of unity $q$). Contrary to the formula of Ponzano and Regge, the TV invariant is a well-defined quantity, since the domination axiom imposes the sum to be taken over a period of the cyclic representations of $\mathcal{U}_{q}\!\left(\mathfrak{sl}_{2}\!\left(\mathbb{C}\right)\right)$. It can be therefore considered as a regularization of the formula of Ponzano and Regge.

More recently, in the early years 2000, these works appeared to be particularly useful to regularize some kind of $\mathrm{SU}\!\left(2\right)$ BF theory showing up naturally in Loop Quantum Gravity, see \textit{e.g.} the works of Laurent Freidel, David Louapre and Etera Livine \cite{FL2004, FL2005, FL2006}.

A first question we may ask is, what would an Abelian version of the TV construction look like, and would it be related in some manner to a $\mathrm{U}\!\left(1\right)$ version of the BF theory\footnote{Remark that $\mathrm{U}\!\left(1\right)$ might be regarded as simpler than $\mathrm{SU}\!\left(2\right)$ since $\mathrm{U}\!\left(1\right)$ is Abelian, but $\mathrm{U}\!\left(1\right)$ is non-simply connected, contrary to $\mathrm{SU}\!\left(2\right)$. Hence, there are non-trivial isomorphism classes of $\mathrm{U}\!\left(1\right)$ bundles while $\mathrm{SU}\!\left(2\right)$ bundles are all trivializable.}? These questions were answered between 2015 and 2017 in \cite{MT2016JMP,MT2016NPB,MT2017}. This investigation was purely $3$-dimensional, but the Abelian TV construction can easily be generalized to any dimension. Likewise, the $\mathrm{U}\!\left(1\right)$ BF theory can also be generalized to any dimension. Hence, a natural question is: Is the relation between the Abelian TV invariant and the $\mathrm{U}\!\left(1\right)$ BF theory true in any dimension? This question is answered in \cite{HMT2022_1,HMT2022_2} and the main points of the investigation are presented in this proceeding. 

We will first explain what a ``higher $\mathrm{U}\!\left(1\right)$ BF theory'', and second, what a ``higher Abelian TV invariant'' consist of. Third, we will show how this ``higher Abelian TV invariant'' can be understood as a discrete BF theory. Finally, we will explain how we can make sense of the claim ``BF and TV theories are the same'' in this higher dimensional Abelian case.

\subsection{A bit of vocabulary}

In this proceeding, we will use the following notations and terminology:
\begin{itemize}
\item[-] $\mathcal{F}_{M}$ for the ``space of fields'' over $M$, \textit{e.g.} $C^{\infty}\left(M\right)$ (scalar fields), $\Gamma\left(M,E\right)$ (sections of a vector bundle $E$ over $M$), $\mathrm{Conn}_{G}\!\left(M\right)$...
\item[-] $G$ for the ``gauge group'' (compact Lie group), \textit{e.g.} most of the time $G = \mathrm{SU}\!\left(N\right)$ or $\mathrm{U}\!\left(1\right)$,
\item[-] $\mathcal{G} \subseteq \mathrm{Map}\left(M,G\right)$, which acts on $\mathcal{F}_{M}$,
\item[-] $S:\mathcal{F}_{M}\to\R$ for the ``action functional'' of the theory considered.
\item[-] If $S$ is $\mathcal{G}$-invariant, \textit{i.e.} if $\forall\,\phi\in\mathcal{F}_{M},\,\forall\,g\in\mathcal{G},\,S\left(g\cdot\phi\right) = S\left(\phi\right)$, then $S$ defines a ``gauge theory''. 
\item[-] If $S:\mathcal{F}_{M}\to\R$ defines a gauge theory, then we may try to consider rather $S:\slfrac{\mathcal{F}_{M}}{\mathcal{G}}\to\R$. 
\item[-] If $e^{2\pi iS}$ is $\mathcal{G}$-invariant, \textit{i.e.} if $\forall\,\phi\in\mathcal{F}_{M},\,\forall\,g\in\mathcal{G},\,S\left(g\cdot\phi\right) = S\left(\phi\right) + n$, $n\in\Z$, then $S$ defines a ``quantum gauge theory''. 
\item[-] If $S:\mathcal{F}_{M}\to\R$ defines a quantum gauge theory, then we may try to consider rather $S:\slfrac{\mathcal{F}_{M}}{\mathcal{G}}\to\slfrac{\R}{\Z}$.
\end{itemize}

\section{Generalized $\mathrm{U}\!\left(1\right)$ BF theory}

\subsection{The standard $\mathrm{SU}\!\left(2\right)$ case}

In this subsection, $M^{\left(3\right)}$ is a closed connected oriented smooth $3$-manifold. Remark that, since $\mathrm{SU}\!\left(2\right)$ is simply connected, then any $\mathrm{SU}\!\left(2\right)$ principal bundle over $M^{\left(3\right)}$ is necessarily trivializable. 

The first field of the standard $\mathrm{SU}\!\left(2\right)$ 3D BF theory is an $\mathrm{SU}\!\left(2\right)$-connection $A^{\left(1\right)}$. Since any $\mathrm{SU}\!\left(2\right)$ principal bundle over $M^{\left(3\right)}$ is trivializable,  $A^{\left(1\right)}$ can be regarded as a (global) $1$-form with coefficients in $\mathfrak{su}\!\left(2\right)$ (the upper index keeps track of the form degree of the object) such that the group $\mathcal{G}$ acts on $A^{\left(1\right)}$ as
\begin{equation}
g\cdot A^{\left(1\right)} := g^{-1}A^{\left(1\right)}g + g^{-1}\opd g,
\end{equation}
for $g\in\mathcal{G}$ and $A^{\left(1\right)}\in\mathrm{Conn}_{\mathrm{SU}\!\left(2\right)}\!\left(M^{\left(3\right)}\right)$.

Recall that the curvature of $A^{\left(1\right)}$ is 
\begin{equation}
F^{\left(2\right)}\left(A^{\left(1\right)}\right) 
= \opd A^{\left(1\right)} + A^{\left(1\right)}\wedge A^{\left(1\right)}
\in\Omega^{2}\!\left(M^{\left(3\right)},\mathfrak{su}\!\left(2\right)\right)
\end{equation}
and $F^{\left(2\right)}\left(A^{\left(1\right)}\right)$ transforms under the action of $\mathcal{G}$ on $A^{\left(1\right)}$ as
\begin{equation}
F^{\left(2\right)}\left(g\cdot A^{\left(1\right)}\right) = g^{-1}F^{\left(2\right)}\left(A^{\left(1\right)}\right)g,
\end{equation}
for $g\in\mathcal{G}$ and $A^{\left(1\right)}\in\mathrm{Conn}_{\mathrm{SU}\!\left(2\right)}\!\left(M^{\left(3\right)}\right)$.

The second field of the standard $\mathrm{SU}\!\left(2\right)$ 3D BF theory is a (global) $1$-form $B^{\left(1\right)}$ with coefficients in $\mathfrak{su}\!\left(2\right)$ such that the group $\mathcal{G}$ acts on $B^{\left(1\right)}$ as
\begin{equation}
g\cdot B^{\left(1\right)} = g^{-1}B^{\left(1\right)}g,
\end{equation}
for $g\in\mathcal{G}$ and $B^{\left(1\right)}\in\Omega^{1}\!\left(M^{\left(3\right)},\mathfrak{su}\!\left(2\right)\right)$.

Then the group $\mathcal{G}$ acts on the pair $\left(A^{\left(1\right)}, B^{\left(1\right)}\right)\in\mathrm{Conn}_{\mathrm{SU}\!\left(2\right)}\!\left(M^{\left(3\right)}\right)\times\Omega^{1}\!\left(M^{\left(3\right)},\mathfrak{su}\!\left(2\right)\right)$ diagonally, \textit{i.e.}
\begin{equation}
g\cdot\left(A^{\left(1\right)}, B^{\left(1\right)}\right) 
= \left(g\cdot A^{\left(1\right)}, g\cdot B^{\left(1\right)}\right)
= \left(g^{-1}A^{\left(1\right)}g + g^{-1}\opd g, g^{-1}B^{\left(1\right)}g\right),
\end{equation}
for $g\in\mathcal{G}$ and $\left(A^{\left(1\right)}, B^{\left(1\right)}\right)\in\mathrm{Conn}_{\mathrm{SU}\!\left(2\right)}\!\left(M^{\left(3\right)}\right)\times\Omega^{1}\!\left(M^{\left(3\right)},\mathfrak{su}\!\left(2\right)\right)$.

\begin{definition}
The standard $\mathrm{SU}\!\left(2\right)$ 3D BF theory with a coupling constant $k$ is defined by
\begin{equation}
\mathcal{F}_{M^{\left(3\right)}}
= \mathrm{Conn}_{\mathrm{SU}\!\left(2\right)}\!\left(M^{\left(3\right)}\right)
\times\Omega^{1}\!\left(M^{\left(3\right)},\mathfrak{su}\!\left(2\right)\right)
\end{equation}
as space of fields over which is defined the action functional\footnote{Whence the name ``BF theory'' which is \textit{not} an acronyme, but stands for ``$B\wedge F$''.}
\begin{align}
S_{\mathrm{BF}_{k}}\!\left(A^{\left(1\right)},B^{\left(1\right)}\right) 
= k\displaystyle\int_{M^{\left(3\right)}}\trace{B^{\left(1\right)}\wedge F^{\left(2\right)}\left(A^{\left(1\right)}\right)}.
\end{align}
\end{definition}

\begin{proposition}
This action functional defines a gauge theory, \textit{i.e.}  
\begin{align}
&S_{\mathrm{BF}_{k}}\!\left(g\cdot\left(A^{\left(1\right)}, B^{\left(1\right)}\right)\right) 
= S_{\mathrm{BF}_{k}}\!\left(A^{\left(1\right)},B^{\left(1\right)}\right),
\end{align}
for $g\in\mathcal{G}$, $A^{\left(1\right)}\in\mathrm{Conn}_{\mathrm{SU}\!\left(2\right)}\!\left(M^{\left(3\right)}\right)$ and $B^{\left(1\right)}\in\Omega^{1}\!\left(M^{\left(3\right)},\mathfrak{su}\!\left(2\right)\right)$.
\end{proposition}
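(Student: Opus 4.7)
The plan is to reduce everything to the cyclic property of the trace together with the transformation laws already stated in the excerpt. Since the integration over $M^{(3)}$ is $\mathcal{G}$-equivariant (the action of $\mathcal{G}$ is fiberwise/pointwise in $M^{(3)}$), it suffices to prove that the integrand $\trace{B^{(1)}\wedge F^{(2)}(A^{(1)})}$ is a gauge-invariant $3$-form on $M^{(3)}$.

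First I would transcribe the action of $g\in\mathcal{G}$ on each factor: $B^{(1)}\mapsto g^{-1}B^{(1)}g$ and, using the transformation rule already established for the curvature, $F^{(2)}(A^{(1)})\mapsto g^{-1}F^{(2)}(A^{(1)})g$. A crucial small observation is that these conjugations act on $\mathfrak{su}(2)$-valued forms, and because the wedge product of forms is compatible with matrix multiplication of coefficients, the two adjacent factors $g\cdot g^{-1}$ arising in the wedge cancel. Concretely,
\begin{align}
(g\cdot B^{(1)})\wedge F^{(2)}(g\cdot A^{(1)})
&= (g^{-1}B^{(1)}g)\wedge(g^{-1}F^{(2)}(A^{(1)})g)\\
&= g^{-1}\bigl(B^{(1)}\wedge F^{(2)}(A^{(1)})\bigr)g.
\end{align}

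Second I would apply the cyclicity of the trace on $\mathfrak{su}(2)$, which extends to $\mathfrak{su}(2)$-valued differential forms without sign subtleties here because $g$ and $g^{-1}$ are $0$-forms and commute with wedging in the graded sense. This yields
\begin{equation}
\trace{(g\cdot B^{(1)})\wedge F^{(2)}(g\cdot A^{(1)})} = \trace{B^{(1)}\wedge F^{(2)}(A^{(1)})},
\end{equation}
and integrating over $M^{(3)}$ and multiplying by $k$ gives the claimed identity $S_{\mathrm{BF}_k}(g\cdot(A^{(1)},B^{(1)}))=S_{\mathrm{BF}_k}(A^{(1)},B^{(1)})$.

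No step here is genuinely hard; the only point that deserves attention is the interplay between wedge degree and matrix conjugation, i.e.\ verifying that inserting $gg^{-1}$ between $B^{(1)}$ and $F^{(2)}(A^{(1)})$ is legitimate despite both being form-valued. This is fine because $g$ is a $0$-form, so no Koszul sign appears; this is the only place where one might otherwise have worried. Once that is noted, the proposition follows immediately from the stated covariance of $F^{(2)}$ and the cyclic invariance of $\trace{\cdot}$.
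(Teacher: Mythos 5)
Your argument is correct: conjugation covariance of $B^{\left(1\right)}$ and of $F^{\left(2\right)}\!\left(A^{\left(1\right)}\right)$, cancellation of the inserted $g\,g^{-1}$ inside the wedge, and cyclicity of the trace (with no Koszul sign since $g$ is a $0$-form) give gauge invariance of the integrand, hence of $S_{\mathrm{BF}_{k}}$. The paper states this proposition without proof, treating it as the standard one-line verification, and your write-up is exactly that expected argument.
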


For interesting facts about the $\mathrm{SU}\!\left(2\right)$ BF theory, in particular its relation to Alexander-Conway invariant of knots and to the $\mathrm{SU}\!\left(2\right)$ Chern-Simons theory, see \cite{CCRM1995,CCRFM1998}. 

\begin{remark}
In this $\mathrm{SU}\!\left(2\right)$ case, $k$ is \textit{not} quantized, \textit{i.e.} $k$ can be any real number.
\end{remark}

\subsection{The 3D $\mathrm{U}\!\left(1\right)$ case}

In this subsection, $M^{\left(3\right)}$ is still a closed connected oriented smooth $3$-manifold.

The $\mathrm{U}\!\left(1\right)$ case is fundamentally different, as there are in general isomorphism classes of non-trivial bundles\footnote{Recall that $\mathrm{U}\!\left(1\right)$ principal bundles are classified by the second \v{C}ech cohomology $\check{H}^{2}\left(M^{\left(3\right)},\Z\right)$.}. The $\mathrm{U}\!\left(1\right)$ case \textit{cannot} be simply regarded as the abelianization of the $\mathrm{SU}\!\left(2\right)$ case.

Moreover, unlike the $\mathrm{SU}\!\left(2\right)$ case, we want here our fields $\mathbf{A}^{\left(1\right)}$ and $\mathbf{B}^{\left(1\right)}$ to be \textit{gauge classes} of $\mathrm{U}\!\left(1\right)$ connections (whence the bold face used for denoting the fields), which means we won't go through the usual gauge fixing procedure \cite{MNE2019}. 

Fortunately, it turns out that the space of gauge classes of $\mathrm{U}\!\left(1\right)$ connections is perfectly well-known: This is the first group of Deligne-Beilinson cohomology $H^{1}_{\mathrm{DB}}\left(M^{\left(3\right)},\Z\right)$, also known as second group of differential cohomology $\widehat{H}^{2}\left(M^{\left(3\right)},\Z\right)$\footnote{The convention of degree of $H^{1}_{\mathrm{DB}}\left(M^{\left(3\right)},\Z\right)$ allows to keep track of the form degree of the field.}. For a description and more references on this topic, see \cite{BB2014,BGST2005}. We recall here a few important facts we will need later on in this paper.

\begin{proposition}
The group $H^{1}_{\mathrm{DB}}\left(M^{\left(3\right)},\Z\right)$ is described by the following short exact sequence that splits:
\begin{equation}
\label{SES_1}
0
\to \slfrac{\Omega^{1}\left(M^{\left(3\right)}\right)}{\Omega^{1}_{\Z}\left(M^{\left(3\right)}\right)}
\to H^{1}_{\mathrm{DB}}\left(M^{\left(3\right)},\Z\right)
\to \check{H}^{2}\left(M^{\left(3\right)},\Z\right)
\to 0
\end{equation}
where $\Omega^{1}\left(M^{\left(3\right)}\right)$ is the space of $1$-forms over $M^{\left(3\right)}$, $\Omega^{1}_{\Z}\left(M^{\left(3\right)}\right)$ is the space of closed $1$-forms with integral periods over $M^{\left(3\right)}$, and $\check{H}^{2}\left(M^{\left(3\right)},\Z\right)$ is the second \v{C}ech cohomology. 
\end{proposition}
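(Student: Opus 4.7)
The plan is to work with the \v{C}ech-Deligne bicomplex model of $H^{1}_{\mathrm{DB}}(M^{(3)},\Z)$. Fix a good open cover $\{U_{i}\}$ of $M^{(3)}$; a class is represented by a triple $(A,\Lambda,N)$ with $A_{i}\in\Omega^{1}(U_{i})$, $\Lambda_{ij}\in\Omega^{0}(U_{ij})$ and $N_{ijk}\in\Z$ on triple overlaps, satisfying the descent equations $A_{j}-A_{i}=d\Lambda_{ij}$, $\delta\Lambda=N$ and $\delta N=0$, modulo the coboundary relation $(A,\Lambda,N)\sim(A+dm,\Lambda+\delta m+p,N+\delta p)$ with $m\in C^{0}(\Omega^{0})$ and $p\in C^{1}(\Z)$.

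First I would define the two maps in the sequence. The right-hand map is the first Chern class
\[
c_{1}\colon[(A,\Lambda,N)]\longmapsto[N]\in\check{H}^{2}(M^{(3)},\Z),
\]
which is well-defined because any coboundary modifies $N$ only by $\delta p$. The left-hand map sends a global 1-form $\omega\in\Omega^{1}(M^{(3)})$ to the class of the cocycle $(\omega|_{U_{i}},0,0)$. Its kernel consists of those $\omega$ for which $\omega|_{U_{i}}=dm_{i}$ with $m_{j}-m_{i}\in\Z$ on every $U_{ij}$; by the Poincar\'e lemma this is precisely the subspace $\Omega^{1}_{\Z}(M^{(3)})$ of closed 1-forms with integral periods.

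Exactness in the middle is then a diagram chase. If $c_{1}[(A,\Lambda,N)]=0$ then $N=\delta p$ for some $p\in C^{1}(\Z)$, and subtracting the coboundary $(0,p,\delta p)$ reduces the triple to $(A',\Lambda',0)$ with $\delta\Lambda'=0$. Because $\Omega^{0}$ is a fine sheaf, the \v{C}ech cocycle $\Lambda'$ is itself a coboundary $\delta m$ (constructed explicitly from a partition of unity), and subtracting $(dm,\delta m,0)$ forces the remaining 0-cochain of 1-forms to glue into a globally defined element of $\Omega^{1}(M^{(3)})$. Surjectivity of $c_{1}$ is the dual argument: given $N\in\check{Z}^{2}(\Z)$, successively lift through the fine sheaves $\Omega^{0}$ and $\Omega^{1}$ using partitions of unity to assemble the required triple.

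For the splitting I would make the above surjectivity canonical. Choose once and for all a partition of unity $\{\rho_{i}\}$ subordinate to $\{U_{i}\}$ and define
\[
s(N)_{ij}=\sum_{k}\rho_{k}N_{kij},\qquad s(N)_{i}=\sum_{j}\rho_{j}\,d\,s(N)_{ji},
\]
so that $s(N)=(s(N)_{i},s(N)_{ij},N_{ijk})$ is a \v{C}ech-Deligne cocycle; both descent equations follow from $\delta N=0$ together with the identity $\sum_{k}\rho_{k}=1$. The assignment $N\mapsto s(N)$ is manifestly $\Z$-linear, and replacing $N$ by $N+\delta q$ modifies $s(N)$ only by a Deligne coboundary, so $s$ descends to a group homomorphism $\check{H}^{2}(M^{(3)},\Z)\to H^{1}_{\mathrm{DB}}(M^{(3)},\Z)$ that splits $c_{1}$ by construction. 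The main obstacle is really just the bookkeeping in this final verification: no new idea enters beyond the combinatorial identity $\sum_{k}\rho_{k}=1$ and the compatibility of $\delta$ with $d$ in the \v{C}ech-Deligne bicomplex.
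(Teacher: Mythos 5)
Your construction of the two maps, the identification of the kernel of $\Omega^{1}\left(M^{\left(3\right)}\right)\to H^{1}_{\mathrm{DB}}\left(M^{\left(3\right)},\Z\right)$ with $\Omega^{1}_{\Z}\left(M^{\left(3\right)}\right)$, exactness in the middle, and surjectivity of $c_{1}$ are correct and standard (the paper itself only recalls this proposition, citing the literature). The genuine gap is in the splitting. Your cochain-level $s(N)=\left(s(N)_{i},s(N)_{ij},N_{ijk}\right)$ is indeed a Deligne cocycle with $c_{1}\left[s(N)\right]=\left[N\right]$ --- that is exactly the partition-of-unity proof of surjectivity --- but the claim that $N\mapsto s(N)$ descends to $\check{H}^{2}\left(M^{\left(3\right)},\Z\right)$ is false. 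Setting $u_{i}=\sum_{k}\rho_{k}q_{ki}$ for $q\in\check{C}^{1}\!\left(\Z\right)$, a direct computation gives $s(N+\delta q)-s(N)=D(-u,q)+(\eta,0,0)$, where $D(-u,q)$ is a Deligne coboundary in your notation and $\eta=\sum_{j}\rho_{j}\,du_{j}$ is a globally defined $1$-form. Since $d\eta=\sum_{j}d\rho_{j}\wedge du_{j}$ does not vanish in general, $\eta$ is generically not even closed, hence not in $\Omega^{1}_{\Z}\left(M^{\left(3\right)}\right)$, so $\left[s(N+\delta q)\right]\neq\left[s(N)\right]$ in $H^{1}_{\mathrm{DB}}$: the map is only well defined modulo the image of $\slfrac{\Omega^{1}\left(M^{\left(3\right)}\right)}{\Omega^{1}_{\Z}\left(M^{\left(3\right)}\right)}$, which is precisely what a splitting is not allowed to be. This failure is consistent with the well-known fact that the splitting of \eqref{SES_1} is non-canonical, so one should not expect a formula given purely by a partition of unity at the cocycle level.

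The proposition is nevertheless true, and the missing step is soft algebra rather than more bookkeeping: $\slfrac{\Omega^{1}\left(M^{\left(3\right)}\right)}{\Omega^{1}_{\Z}\left(M^{\left(3\right)}\right)}$ is a quotient of a real vector space, hence a divisible abelian group, hence an injective $\Z$-module, and every short exact sequence of abelian groups with injective kernel splits. Equivalently, since $\check{H}^{2}\left(M^{\left(3\right)},\Z\right)$ is finitely generated, choose arbitrary preimages of a set of generators (your cocycles $s(N)$ will do) and use divisibility of the kernel to correct the preimages of the torsion generators so that they have the correct order; this defines a (non-canonical) section of $c_{1}$. Replacing your final paragraph by this argument closes the gap.
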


\begin{proposition}
Over $H^{1}_{\mathrm{DB}}\left(M^{\left(3\right)},\Z\right)$, there exists 
\begin{itemize}
\item[-] A (symmetric) pairing
\begin{equation}
\star : \left. 
\begin{tabular}{ccc}
$H^{1}_{\mathrm{DB}}\left(M^{\left(3\right)},\Z\right)\times H^{1}_{\mathrm{DB}}\left(M^{\left(3\right)},\Z\right)$
& $\longrightarrow$ 
& $\slfrac{\Omega^{3}\left(M^{\left(3\right)}\right)}{\Omega^{3}_{\Z}\left(M^{\left(3\right)}\right)}$ \\
$\left(\mathbf{A}^{\left(1\right)},\mathbf{B}^{\left(1\right)}\right)$ 
& $\longmapsto$
& $\mathbf{A}^{\left(1\right)}\star\mathbf{B}^{\left(1\right)}$
\end{tabular}
\right.
\end{equation}
\item[-] A notion of integral over singular $1$-cycles
\begin{equation}
\int : \left.
\begin{tabular}{ccc}
$Z_{1}\left(M^{\left(3\right)},\Z\right)\times H^{1}_{\mathrm{DB}}\left(M^{\left(3\right)},\Z\right)$
& $\longrightarrow$ 
& $\slfrac{\R}{\Z}$ \\
& \\
$\left(z_{\left(1\right)},\mathbf{A}^{\left(1\right)}\right)$ 
& $\longmapsto$
& $\int_{z_{\left(1\right)}}\mathbf{A}^{\left(1\right)}$
\end{tabular}
\right.
\end{equation}
\end{itemize}
\end{proposition}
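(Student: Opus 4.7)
The plan is to build both structures from the \v{C}ech--de Rham presentation of Deligne--Beilinson cohomology. Fix a good open cover $\mathcal{U}=\{U_i\}$ of $M^{(3)}$ and represent a class in $H^{1}_{\mathrm{DB}}(M^{(3)},\Z)$ by a triple $(\omega_i,\lambda_{ij},n_{ijk})$ with $\omega_i\in\Omega^{1}(U_i)$, $\lambda_{ij}\in\Omega^{0}(U_{ij})$, $n_{ijk}\in\Z$, subject to the usual descent conditions $\omega_j-\omega_i=\opd\lambda_{ij}$, $\lambda_{ij}-\lambda_{ik}+\lambda_{jk}=n_{ijk}$, $\delta n=0$, modulo the DB coboundaries. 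From this data the short exact sequence (\ref{SES_1}) is immediate by extracting either the global curvature $\opd\omega_i$ on the left or the integer \v{C}ech class $[n_{ijk}]$ on the right.

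For the pairing $\star$, the approach is to use the Beilinson--Deligne cup product, realized at the \v{C}ech--de Rham level by the standard Cartan-style formula combining wedge products of the $\omega$'s with products mixing $\lambda$'s and $n$'s along intersections. The resulting triple is a \v{C}ech--de Rham cocycle of DB degree three. Since $\dim M^{(3)}=3$ gives $\check{H}^{4}(M^{(3)},\Z)=0$, the analog of (\ref{SES_1}) in top degree reduces to the isomorphism $H^{3}_{\mathrm{DB}}(M^{(3)},\Z)\simeq\Omega^{3}(M^{(3)})/\Omega^{3}_{\Z}(M^{(3)})$, and composing with the cup product yields the advertised target. Symmetry of $\star$ is then inherited from the graded commutativity of the DB cup product: both arguments sit in DB-degree two, so the commutation sign is $(-1)^{2\cdot 2}=+1$.

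For the integral over a singular 1-cycle $z_{(1)}$, the plan is to subdivide $z_{(1)}$ into oriented 1-simplices $\sigma_\alpha=[v_\alpha,v_{\alpha+1}]$ each contained in some open $U_{i_\alpha}$ of the cover, and to set
\begin{equation*}
\int_{z_{(1)}}\mathbf{A}^{(1)}\;:=\;\sum_\alpha\int_{\sigma_\alpha}\omega_{i_\alpha}\;+\;\sum_\alpha\lambda_{i_{\alpha-1}i_\alpha}(v_\alpha)\pmod{\Z}.
\end{equation*}
A direct application of the descent conditions shows that changing the labels $i_\alpha$ or refining the subdivision alters this sum only by integer combinations of the $n_{ijk}$ evaluated on the triangulation, so the formula descends to an element of $\R/\Z$; it vanishes on DB coboundaries for the same reason, and depends on $z_{(1)}$ only through its homology class in $Z_{1}(M^{(3)},\Z)$.

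The main obstacle is essentially bookkeeping: (i) checking that the explicit \v{C}ech--de Rham formula for $\star$ is indeed a cocycle and is changed only by a coboundary under refinement of the cover and change of representative, and (ii) verifying that the sums defining $\int_{z_{(1)}}\mathbf{A}^{(1)}$ are independent of all auxiliary choices modulo $\Z$. Both reduce to routine but lengthy applications of the descent equations, and once they are in place, symmetry of $\star$, bilinearity of both pairings and well-definedness on cohomology classes follow formally.
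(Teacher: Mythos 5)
Your construction is correct, and it is essentially the standard one underlying the statement: the paper does not prove this proposition but recalls it from the literature it cites (B\"ar--Becker, Bauer--Girardi--Stora--Thuillier), where the pairing is exactly the Beilinson--Deligne cup product in the \v{C}ech--de Rham presentation and the target is identified via $\check{H}^{4}\left(M^{\left(3\right)},\Z\right)=0$ with $\slfrac{\Omega^{3}\left(M^{\left(3\right)}\right)}{\Omega^{3}_{\Z}\left(M^{\left(3\right)}\right)}$ --- the same isomorphism $H^{n}_{\mathrm{DB}}\left(M^{\left(n\right)},\Z\right)\cong\slfrac{\Omega^{n}\left(M^{\left(n\right)}\right)}{\Omega^{n}_{\Z}\left(M^{\left(n\right)}\right)}$ the paper records later. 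Your holonomy formula for $\int_{z_{\left(1\right)}}\mathbf{A}^{\left(1\right)}$ is the standard one, and its independence of the subdivision, chart labels, and choice of DB representative is indeed a routine consequence of the descent equations. On symmetry, note that your sign $(-1)^{2\cdot 2}=+1$ is the one attached to the $\widehat{H}^{\bullet}$ degree convention (both arguments in $\widehat{H}^{2}$); in the paper's DB-degree convention the relevant exponent is $(p+1)(q+1)$, which again gives $+1$ for $p=q=1$, so the conclusion is the same.

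One caveat: your closing assertion that the integral ``depends on $z_{\left(1\right)}$ only through its homology class'' is not correct as stated. The holonomy of a non-flat class distinguishes homologous (even homotopic) cycles: if $z'_{\left(1\right)}-z_{\left(1\right)}=\oppartial c_{\left(2\right)}$, the two integrals differ by $\int_{c_{\left(2\right)}}F$ modulo $\Z$, where $F$ is the curvature, and this need not vanish. Fortunately the proposition only requires a well-defined pairing on $Z_{1}\left(M^{\left(3\right)},\Z\right)\times H^{1}_{\mathrm{DB}}\left(M^{\left(3\right)},\Z\right)$, i.e.\ well-definedness on actual cycles and on DB classes, which your argument does establish; so this extra claim should simply be dropped rather than repaired.
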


\begin{proposition}
Another notion of integral will be very useful here. This is
\begin{equation}
\label{Int_M3}
\int_{M^{\left(3\right)}} : \left.
\begin{tabular}{ccc}
$\slfrac{\Omega^{3}\left(M^{\left(3\right)}\right)}{\Omega^{3}_{\Z}\left(M^{\left(3\right)}\right)}$
& $\longrightarrow$ 
& $\slfrac{\R}{\Z}$ \\
& \\
$\mathbf{L}^{\left(3\right)}$ 
& $\longmapsto$
& $\int_{M^{\left(3\right)}} \mathbf{L}^{\left(3\right)}$
\end{tabular}
\right.
\end{equation}
\end{proposition}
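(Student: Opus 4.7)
The content of the proposition is the assertion that ordinary de Rham integration over $M^{(3)}$ descends from $\Omega^{3}\left(M^{(3)}\right)$ to a well-defined map from the quotient $\Omega^{3}\left(M^{(3)}\right)/\Omega^{3}_{\Z}\left(M^{(3)}\right)$ into $\R/\Z$. My plan is thus to verify that the usual integration sends $\Omega^{3}_{\Z}\left(M^{(3)}\right)$ into $\Z \subset \R$, so that passing to the quotient on both sides produces the claimed map (\ref{Int_M3}).

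First, I would point out that since $M^{(3)}$ has dimension three, every $\omega\in\Omega^{3}\left(M^{(3)}\right)$ is automatically closed (there are no non-zero $4$-forms), so the subspace $\Omega^{3}_{\Z}\left(M^{(3)}\right)$ is simply the space of $3$-forms whose integral against every singular $3$-cycle is an integer. Because $M^{(3)}$ is closed, connected, and oriented, it admits a fundamental class $[M^{(3)}]\in H_{3}\left(M^{(3)},\Z\right)\cong\Z$, which can be represented (after choice of a smooth triangulation) by $M^{(3)}$ itself regarded as a singular $3$-cycle. For any $\omega\in\Omega^{3}_{\Z}\left(M^{(3)})\right)$, the de Rham pairing gives
\begin{equation}
\int_{M^{(3)}}\omega \;=\; \langle [\omega], [M^{(3)}]\rangle \;\in\; \Z,
\end{equation}
where the first equality is just the identification of integration over the manifold with evaluation on its fundamental class, and the membership in $\Z$ is the very definition of "integral periods''.

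Consequently the composition $\Omega^{3}\left(M^{(3)}\right)\xrightarrow{\int_{M^{(3)}}}\R\twoheadrightarrow\R/\Z$ vanishes on $\Omega^{3}_{\Z}\left(M^{(3)}\right)$, and the universal property of the quotient delivers the unique linear map (\ref{Int_M3}) promised in the statement. No serious obstacle is present here: the only thing that must be handled carefully is the identification of the period of a top-degree form over the fundamental $3$-cycle with the ordinary integral over $M^{(3)}$, which is standard once an orientation and a triangulation are fixed. The proof is then essentially a verification of well-definedness, entirely parallel to the reasoning already implicit in the target $\slfrac{\R}{\Z}$ of the integral-over-$1$-cycles pairing of the previous proposition.
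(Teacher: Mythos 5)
Your verification is correct and amounts to exactly the argument the paper implicitly relies on: the proposition is stated there without proof, as a recollection of standard facts about Deligne-Beilinson cohomology, and the well-definedness reduces, as you show, to the observation that a top-degree form in $\Omega^{3}_{\Z}\left(M^{\left(3\right)}\right)$ has integral integral over the fundamental class, so ordinary integration descends to a map $\slfrac{\Omega^{3}\left(M^{\left(3\right)}\right)}{\Omega^{3}_{\Z}\left(M^{\left(3\right)}\right)}\to\slfrac{\R}{\Z}$.
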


\begin{remark}
We want to emphasize that the integrals introduced in the two previous propositions are $\slfrac{\R}{\Z}$-valued. Said differently, only the complex exponential truly makes sense, and we will see later on that this is anyway what we want to consider in a quantum treatment of the theory for the partition function and the expectation values of observables.
\end{remark}

\begin{definition}
The $\mathrm{U}\!\left(1\right)$ 3D BF theory with a coupling constant $k\in\Z$\footnote{The coupling constant $k$ has to be an integer since the integral in the action is $\slfrac{\R}{\Z}$-valued.} is defined by 
\begin{equation}
\slfrac{\mathcal{F}_{M^{\left(3\right)}}}{\mathcal{G}} 
= H^{1}_{\mathrm{DB}}\left(M^{\left(3\right)},\Z\right)
\times H^{1}_{\mathrm{DB}}\left(M^{\left(3\right)},\Z\right)
\end{equation}
as space of fields (modulo gauge transformations) over which is defined the \textit{quantum} action functional
\begin{equation}
S_{\mathrm{BF}_{k}}\left(\mathbf{A}^{\left(1\right)},\mathbf{B}^{\left(1\right)}\right)
\underset{\Z}{=} k\displaystyle\int_{M^{\left(3\right)}}\mathbf{B}^{\left(1\right)}\star\mathbf{A}^{\left(1\right)} 
\end{equation}
and the \textit{quantum} observables
\begin{equation}
W\!\left(\mathbf{A}^{\left(1\right)}, z^{\mathbf{A}}_{\left(1\right)}, \mathbf{B}, z^{\mathbf{B}}_{\left(1\right)}\right)
= e^{2\pi i\int_{z^{\mathbf{A}}_{\left(1\right)}}\mathbf{A}^{\left(1\right)}}
e^{2\pi i\int_{z^{\mathbf{B}}_{\left(1\right)}}\mathbf{B}^{\left(1\right)}},
\end{equation}
where $\mathbf{A}^{\left(1\right)}, \mathbf{B}^{\left(1\right)}\in H^{1}_{\mathrm{DB}}\left(M^{\left(3\right)},\Z\right)$, $z^{\mathbf{A}}_{\left(1\right)},z^{\mathbf{B}}_{\left(1\right)}\in Z_{1}\left(M^{\left(3\right)},\Z\right)$.  
\end{definition}

\begin{remark}
If $\mathbf{B}^{\left(1\right)} = \mathbf{A}^{\left(1\right)}$, which is possible because we chose $\mathbf{A}^{\left(1\right)}$ and $\mathbf{B}^{\left(1\right)}$ to be both (gauge classes of) $\mathrm{U}\!\left(1\right)$ connections, then we get the quantum action of the $\mathrm{U}\!\left(1\right)$ Chern-Simons theory
\begin{equation}
S_{\mathrm{CS}_{k}}\left(\mathbf{A}^{\left(1\right)}\right) 
= k\displaystyle\int_{M^{\left(3\right)}}\mathbf{A}^{\left(1\right)}\star\mathbf{A}^{\left(1\right)}.
\end{equation}
\end{remark}

\subsection{The nD $\mathrm{U}\!\left(1\right)$ case}

In this subsection, $M^{\left(n\right)}$ is a closed connected oriented smooth $n$-manifold.

\begin{proposition}
The short exact sequence \eqref{SES_1} can actually be generalized as
\begin{equation}
\label{SES_p}
0
\to \slfrac{\Omega^{p}\left(M^{\left(n\right)}\right)}{\Omega^{p}_{\Z}\left(M^{\left(n\right)}\right)}
\to H^{p}_{\mathrm{DB}}\left(M^{\left(n\right)},\Z\right)
\to \check{H}^{p+1}\left(M^{\left(n\right)},\Z\right)
\to 0.
\end{equation}
and it still splits.
\end{proposition}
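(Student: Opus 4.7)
The plan is to realize $H^{p}_{\mathrm{DB}}\!\left(M^{\left(n\right)},\Z\right)$ as the hypercohomology of the Deligne complex
\begin{equation}
\mathcal{D}\!\left(p\right)^{\bullet}: \Z\hookrightarrow\Omega^{0}\xrightarrow{\opd}\Omega^{1}\xrightarrow{\opd}\cdots\xrightarrow{\opd}\Omega^{p-1},
\end{equation}
and, in that description, to identify the two maps of the sequence. The leftmost map would send (the class of) a $p$-form $\omega\in\Omega^{p}\!\left(M^{\left(n\right)}\right)$ to the Deligne-Beilinson class obtained by equipping the trivial $\left(p-1\right)$-gerbe with $\omega$ as global connection; the rightmost map would extract the characteristic class of the underlying $\left(p-1\right)$-gerbe, which is exactly an element of $\check{H}^{p+1}\!\left(M^{\left(n\right)},\Z\right)$. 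The fact that \eqref{SES_1} was known for $p=1$ is not specific to dimension three: the same pattern works for every $p$, and this is what I would verify next.

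For exactness I would proceed position by position. Injectivity of the first arrow is immediate: two global $p$-forms define the same trivial Deligne-Beilinson class if and only if they differ by a closed form with integral periods, which is precisely the defining quotient. Exactness at $H^{p}_{\mathrm{DB}}$ reflects that a Deligne-Beilinson class has trivial characteristic class exactly when the underlying gerbe admits a global section, in which case the class reduces to a global $p$-form modulo $\Omega^{p}_{\Z}$. Surjectivity of the last arrow is the main technical point and can be settled either by directly constructing local connection data on the nerve of a good open cover of $M^{\left(n\right)}$ out of a given \v{C}ech $\left(p+1\right)$-cocycle by iterated use of a partition of unity, or, more abstractly, by reading the sequence as a piece of the long exact sequence in hypercohomology attached to a short exact sequence of complexes of the form $0\to\Omega^{\leq p-1}\left[1\right]\to\mathcal{D}\!\left(p\right)^{\bullet}\to\Z\to 0$.

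For the splitting, I would observe that the leftmost group is a divisible abelian group: given $\omega\in\Omega^{p}\!\left(M^{\left(n\right)}\right)$ and a nonzero integer $m$, the class of $\omega/m$ satisfies $m\cdot\left[\omega/m\right]=\left[\omega\right]$ in $\slfrac{\Omega^{p}\left(M^{\left(n\right)}\right)}{\Omega^{p}_{\Z}\left(M^{\left(n\right)}\right)}$. A divisible abelian group is an injective $\Z$-module, so $\Ext{\check{H}^{p+1}\!\left(M^{\left(n\right)},\Z\right)}{\slfrac{\Omega^{p}\left(M^{\left(n\right)}\right)}{\Omega^{p}_{\Z}\left(M^{\left(n\right)}\right)}}$ vanishes and the sequence splits for abstract reasons, without the need to construct an explicit section.

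The main obstacle is the surjectivity of the characteristic class map: this is what really makes the $p$-dependence of the statement nontrivial, since one has to produce a Deligne-Beilinson cocycle carrying prescribed topological data, and it is this step that encodes the geometric fact that every isomorphism class of $\left(p-1\right)$-gerbe admits a connection. Once this is in place, the remaining exactness checks are bookkeeping and, as noted above, the splitting follows automatically from the divisibility of the leftmost term.
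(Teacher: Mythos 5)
Your proposal is correct and follows the standard route that the paper itself merely cites (the proposition is stated there without proof, with references to the differential-cohomology literature): realize the group as hypercohomology of a Deligne complex, read off exactness from the associated long exact sequence (surjectivity of the characteristic-class map being precisely the statement that every $\left(p-1\right)$-gerbe admits a connection, constructed via a partition of unity on a good cover), and get the splitting for free because $\slfrac{\Omega^{p}\left(M^{\left(n\right)}\right)}{\Omega^{p}_{\Z}\left(M^{\left(n\right)}\right)}$ is a quotient of a real vector space, hence divisible, hence an injective $\Z$-module with vanishing $\Ext{\check{H}^{p+1}\left(M^{\left(n\right)},\Z\right)}{\slfrac{\Omega^{p}\left(M^{\left(n\right)}\right)}{\Omega^{p}_{\Z}\left(M^{\left(n\right)}\right)}}$. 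One cosmetic caveat: with the paper's degree convention, in which $H^{p}_{\mathrm{DB}}\left(M^{\left(n\right)},\Z\right)=\widehat{H}^{p+1}\left(M^{\left(n\right)},\Z\right)$ sits in a sequence with $\Omega^{p}$ on the left and $\check{H}^{p+1}$ on the right, the relevant complex is $\Z\to\Omega^{0}\to\cdots\to\Omega^{p}$ with hypercohomology taken in degree $p+1$; the complex you wrote, ending at $\Omega^{p-1}$, produces what the paper would call $H^{p-1}_{\mathrm{DB}}$, so your indexing is off by one, but this does not affect the structure or validity of the argument.
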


\begin{remark}
The exact sequence \eqref{SES_p} does make sense for $p = -1$ (provided we set $\slfrac{\Omega^{-1}\left(M^{\left(n\right)}\right)}{\Omega^{-1}_{\Z}\left(M^{\left(n\right)}\right)}\cong 0$, which implies then that $H^{-1}_{\mathrm{DB}}\left(M^{\left(n\right)},\Z\right) \cong \Z$) and $p = n$. However, in the following, we will \textit{not} consider those two cases for the quantum fields of our BF theory, as they would imply that one of the two fields is actually just a number.
\end{remark}

\begin{remark}
The elements of $H^{p}_{\mathrm{DB}}\left(M^{\left(n\right)},\Z\right)$ may be regarded as ``gauge classes of connections over $\left(p-1\right)$-gerbes'', where ``$\left(p-1\right)$-gerbes'' are structures classified up to isomorphism by $\check{H}^{p+1}\left(M^{\left(n\right)},\Z\right)$. Hence, a $\mathrm{U}\!\left(1\right)$ principal bundle is nothing but a $0$-gerbe, and what is usually called as a ``gerbe'' is actually a $1$-gerbe.
\end{remark}

\begin{proposition}
Over $H^{\bullet}_{\mathrm{DB}}\left(M^{\left(n\right)},\Z\right)$, we can define
\begin{itemize}
\item[-] A pairing
\begin{equation}
\star : \left. 
\begin{tabular}{ccc}
$H^{p}_{\mathrm{DB}}\left(M^{\left(n\right)},\Z\right)\times H^{q}_{\mathrm{DB}}\left(M^{\left(n\right)},\Z\right)$
& $\longrightarrow$ 
& $H^{p+q+1}_{\mathrm{DB}}\left(M^{\left(n\right)},\Z\right)$\\
& \\
$\left(\mathbf{A}^{\left(p\right)}, \mathbf{B}^{\left(q\right)}\right)$ 
& $\longmapsto$
& $\mathbf{A}^{\left(p\right)}\star \mathbf{B}^{\left(q\right)}$
\end{tabular}
\right.
\end{equation}
\item[-] A notion of integral over singular $p$-cycles
\begin{equation}
\int : \left.
\begin{tabular}{ccc}
$Z_{p}\left(M^{\left(n\right)},\Z\right)\times H^{p}_{\mathrm{DB}}\left(M^{\left(n\right)},\Z\right)$
& $\longrightarrow$ 
& $\slfrac{\R}{\Z}$ \\
& \\
$\left(z_{\left(p\right)},\mathbf{A}^{\left(p\right)}\right)$ 
& $\longmapsto$
& $\int_{z_{\left(p\right)}}\mathbf{A}^{\left(p\right)}$
\end{tabular}
\right.
\end{equation}
\end{itemize}
\end{proposition}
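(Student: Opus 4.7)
The plan is to work at the level of the \v{C}ech--de Rham realisation of Deligne--Beilinson cohomology. Fix a good open cover $\mathcal{U}$ of $M^{\left(n\right)}$ and identify $H^{p}_{\mathrm{DB}}\!\left(M^{\left(n\right)},\Z\right)$ with the $p$-th hypercohomology of the Deligne complex $\mathcal{D}\!\left(p\right)^{\bullet}: \Z \hookrightarrow \Omega^{0} \to \cdots \to \Omega^{p-1}$ computed against $\mathcal{U}$. In this model, a DB class is represented by a total cocycle of the associated \v{C}ech--de Rham double complex interpolating between a local $p$-form datum and an integer \v{C}ech $\left(p+1\right)$-cocycle.

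For the pairing $\star$, I would invoke Beilinson's explicit cup product formula on Deligne cocycles: it combines the de Rham pieces by the wedge product and the \v{C}ech pieces by the Alexander--Whitney cup product, with carefully prescribed signs, to produce a total cocycle valued in $\mathcal{D}\!\left(p+q+1\right)^{\bullet}$; the output naturally sits in degree $p+q+1$ because the form and \v{C}ech bidegrees both add under the product. One then checks that this is a chain map for the Deligne differential, that the class of the output depends only on the classes of the inputs (not on the representatives, nor on the good cover, which requires a refinement argument), and that these ingredients combine to a well-defined pairing on cohomology.

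For the integration map over singular $p$-cycles, the cleanest route is to use the standard isomorphism between $H^{p}_{\mathrm{DB}}\!\left(M^{\left(n\right)},\Z\right)$ and the group of Cheeger--Simons differential characters $\widehat{H}^{p+1}\!\left(M^{\left(n\right)},\Z\right)$: an element of the latter is by definition a group homomorphism $Z_{p}\!\left(M^{\left(n\right)},\Z\right) \to \slfrac{\R}{\Z}$ subject to a smoothness condition, so the integral is simply the evaluation of that character on $z_{\left(p\right)}$. Equivalently, one triangulates $z_{\left(p\right)}$ subordinate to $\mathcal{U}$, integrates each local form-component of the chosen Deligne cocycle against the matching simplex, and sums; the $\slfrac{\R}{\Z}$-valuedness exactly captures the fact that the ambiguity coming from a different representative or triangulation lies in $\Z$, absorbed by the integer \v{C}ech component of the cocycle.

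The main obstacle is bookkeeping: verifying that Beilinson's formula really is a chain map for the Deligne differential, and that both constructions are genuinely independent of all auxiliary choices (representative, good cover, triangulation). These checks are standard but not painless, and in the spirit of a proceeding I would treat them as part of the general theory of DB cohomology, citing \cite{BB2014,BGST2005}, and focus only on the feature that matters for the BF/TV comparison ahead: the compatibility between $\star$ and the integration map $\int_{M^{\left(n\right)}}$ that upgrades these structures to the $\slfrac{\R}{\Z}$-valued action of the quantum theory.
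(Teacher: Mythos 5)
Your proposal is correct and is essentially the paper's own treatment: the proposition is stated there without proof, with the work delegated to \cite{BB2014,BGST2005}, and those references establish exactly what you sketch — the Beilinson cup product on \v{C}ech--de Rham Deligne cocycles (wedge on forms, Alexander--Whitney on the \v{C}ech direction, with signs) and the $\slfrac{\R}{\Z}$-valued evaluation on singular cycles via the Cheeger--Simons differential character, together with the independence-of-choices checks you defer to the general theory. One bookkeeping caveat: with the paper's degree convention $H^{p}_{\mathrm{DB}}\left(M^{\left(n\right)},\Z\right)\cong\widehat{H}^{p+1}\left(M^{\left(n\right)},\Z\right)$, the correct identification is with the $\left(p+1\right)$-st hypercohomology of the complex $\Z\to\Omega^{0}\to\cdots\to\Omega^{p}$, so your ``$p$-th hypercohomology of $\Z\to\Omega^{0}\to\cdots\to\Omega^{p-1}$'' is shifted by one (it describes $H^{p-1}_{\mathrm{DB}}$ in this convention), though this does not affect the substance of the argument.
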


\begin{remark}
A particular case of this integral is for $p = n$, in which case $z_{\left(n\right)} = M^{\left(n\right)}$ (potentially multiplied by an integer) and
\begin{equation}
H^{n}_{\mathrm{DB}}\left(M^{\left(n\right)},\Z\right)
\cong\slfrac{\Omega^{n}\left(M^{\left(n\right)}\right)}{\Omega^{n}_{\Z}\left(M^{\left(n\right)}\right)}, 
\end{equation}
which is an isomorphism that we implicitly used in \eqref{Int_M3}.
\end{remark}

\begin{remark}
The pairing $\star:H^{p}_{\mathrm{DB}}\left(M^{\left(n\right)},\Z\right)\times H^{q}_{\mathrm{DB}}\left(M^{\left(n\right)},\Z\right)$ is almost graded-commutative, \textit{i.e.} with the convention of degree of $H^{\bullet}_{\mathrm{DB}}\left(M,\Z\right)$, 
\begin{equation}
\mathbf{A}^{\left(p\right)}\star\mathbf{B}^{\left(q\right)} 
= \left(-1\right)^{p+q+1}\mathbf{B}^{\left(q\right)}\star\mathbf{A}^{\left(p\right)} 
\end{equation}
for $\left(\mathbf{A}^{\left(p\right)},\mathbf{B}^{\left(q\right)}\right)\in H^{p}_{\mathrm{DB}}\left(M^{\left(n\right)},\Z\right)\times H^{q}_{\mathrm{DB}}\left(M^{\left(n\right)},\Z\right)$, while with the convention of degree of $\widehat{H}^{\bullet}\left(M^{\left(n\right)},\Z\right)$, the product is truly graded-commutative.
\end{remark}

\begin{definition}
Since we can make sense of Deligne-Beilinson cohomology classes in any degree as well as their product and integral, the $\mathrm{U}\!\left(1\right)$ nD BF theory with a coupling constant $k\in\Z$ will be defined by 
\begin{equation}
\slfrac{\mathcal{F}_{M^{\left(n\right)}}}{\mathcal{G}} 
= H^{p}_{\mathrm{DB}}\left(M^{\left(n\right)},\Z\right)
\times H^{q}_{\mathrm{DB}}\left(M^{\left(n\right)},\Z\right)
\end{equation}
as space of fields (modulo gauge transformations) over which is defined the \textit{quantum} action functional
\begin{equation}
S_{\mathrm{BF}_{k}}\left(\mathbf{A}^{\left(p\right)},\mathbf{B}^{\left(q\right)}\right)
\underset{\Z}{=} k\displaystyle\int_{M^{\left(n\right)}}\mathbf{B}^{\left(q\right)}\star\mathbf{A}^{\left(p\right)} 
\end{equation}
and the \textit{quantum} observables
\begin{equation}
W\!\left(\mathbf{A}^{\left(p\right)}, z^{\mathbf{A}}_{\left(p\right)}, 
\mathbf{B}^{\left(q\right)}, z^{\mathbf{B}}_{\left(q\right)}\right)
= e^{2\pi i\int_{z^{\mathbf{A}}_{\left(p\right)}}\mathbf{A}^{\left(p\right)}}
e^{2\pi i\int_{z^{\mathbf{B}}_{\left(q\right)}}\mathbf{B}^{\left(q\right)}},
\end{equation}
with $n = p + q + 1$, $\left(\mathbf{A}^{\left(p\right)},\mathbf{B}^{\left(q\right)}\right)\in H^{p}_{\mathrm{DB}}\left(M^{\left(n\right)},\Z\right)\times H^{q}_{\mathrm{DB}}\left(M^{\left(n\right)},\Z\right)$, $z^{\mathbf{A}}_{\left(p\right)},z^{\mathbf{B}}_{\left(q\right)}\in Z_{p}\left(M^{\left(n\right)},\Z\right)\times Z_{q}\left(M^{\left(n\right)},\Z\right)$ and $k\in\Z$.
\end{definition}

Exactly like in the 3D case \cite{GT2008,GT2013,GT2014,MT2016JMP,MT2016NPB,MT2017}, quantum computations can be performed using the fact that the short exact sequence \eqref{SES_p} splits, that is, decomposing a Deligne-Beilinson cohomology classes as
\begin{equation}
\mathbf{A}^{\left(p\right)} = \mathbf{A}^{\left(p\right)}_{\mathbf{a}} + \boldsymbol{\alpha}^{\left(p\right)} 
\end{equation}
where the index $\mathbf{a}$ belongs to $\check{H}^{p+1}\left(M^{\left(n\right)},\Z\right)$ and $\boldsymbol{\alpha}^{\left(p\right)}\in\slfrac{\Omega^{p}\left(M^{\left(n\right)}\right)}{\Omega^{p}_{\Z}\left(M^{\left(n\right)}\right)}$, see Figure \ref{Space_Fields}. 

\begin{figure}
\includegraphics{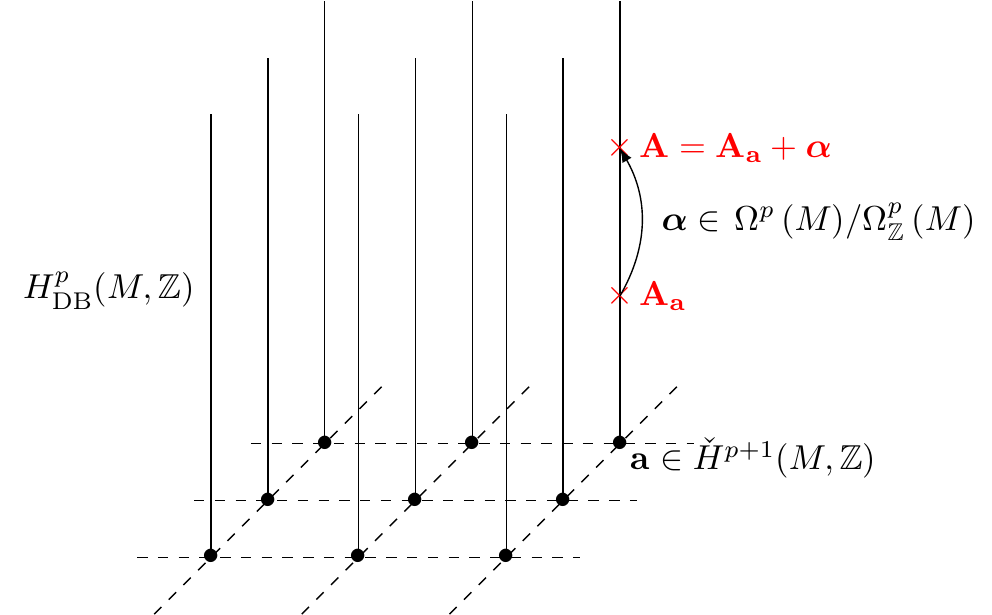}
\caption{Representation of the space of fields $\slfrac{\mathcal{F}_{M^{\left(n\right)}}}{\mathcal{G}} = H^{p}_{\mathrm{DB}}\left(M^{\left(n\right)},\Z\right)$}
\label{Space_Fields}
\end{figure}

We can go even further by writing that, as an Abelian group, 
\begin{equation}
\check{H}^{p+1}\left(M,\Z\right) = \check{F}^{p+1}\oplus\check{T}^{p+1}, 
\end{equation}
where $\check{F}^{p+1}\cong\Z^{b^{p+1}}$ is the free part of $\check{H}^{p+1}\left(M^{\left(n\right)},\Z\right)$, while 
\begin{equation}
\check{T}^{p+1}\cong\slfrac{\Z}{\zeta^{p+1}_{1}\Z}\oplus\hdots\oplus\slfrac{\Z}{\zeta^{p+1}_{t^{p+1}}\Z},
\end{equation}
with $\zeta^{p+1}_{i}\vert \zeta^{p+1}_{i+1}$, is its torsion part, and decomposing $\slfrac{\Omega^{p}\left(M^{\left(n\right)}\right)}{\Omega^{p}_{\Z}\left(M^{\left(n\right)}\right)}$ as 
\begin{align}
&\slfrac{\Omega^{p}\left(M^{\left(n\right)}\right)}{\Omega^{p}_{\Z}\left(M^{\left(n\right)}\right)}\\
\nonumber
&\hspace{1.cm}= \slfrac{\Omega^{p}\left(M^{\left(n\right)}\right)}{\Omega^{p}_{\mathrm{cl}}\left(M^{\left(n\right)}\right)}
\oplus\slfrac{\Omega^{p}_{\mathrm{cl}}\left(M^{\left(n\right)}\right)}{\Omega^{p}_{\Z}\left(M^{\left(n\right)}\right)},
\end{align}
$\Omega^{p}_{\mathrm{cl}}\left(M^{\left(n\right)}\right)$ being the space of closed $p$-forms, knowing that 
\begin{equation}
\slfrac{\Omega^{p}_{\mathrm{cl}}\left(M^{\left(n\right)}\right)}{\Omega^{p}_{\Z}\left(M^{\left(n\right)}\right)} 
\cong \left(\slfrac{\R}{\Z}\right)^{b^{p}}. 
\end{equation}
Hence, we can actually write for any class $\mathbf{A}^{\left(p\right)}\in H^{p}_{\mathrm{DB}}\left(M^{\left(n\right)},\Z\right)$,
\begin{equation}
\mathbf{A}^{\left(p\right)} = \mathbf{A}^{\left(p\right)}_{\mathbf{m}} + \mathbf{A}^{\left(p\right)}_{\boldsymbol{\kappa}} + \boldsymbol{\alpha}^{\left(p\right)}_{0} + \boldsymbol{\alpha}^{\left(p\right)}_{\perp},
\end{equation}
with $\mathbf{m}\in\check{F}^{p+1}$, $\boldsymbol{\kappa}\in\check{T}^{p+1}$, $\boldsymbol{\alpha}^{\left(p\right)}_{\perp}\in\slfrac{\Omega^{p}\left(M^{\left(n\right)}\right)}{\Omega^{p}_{\mathrm{cl}}\left(M^{\left(n\right)}\right)}$ and the so-called ``zero modes'' $\boldsymbol{\alpha}^{\left(p\right)}_{0}\in\slfrac{\Omega^{p}_{\mathrm{cl}}\left(M^{\left(n\right)}\right)}{\Omega^{p}_{\Z}\left(M^{\left(n\right)}\right)}$. 

This decomposition is of course not unique, and the computations we want to perform should be independent from the choice of decomposition. The whole point is to find a nice decomposition that makes our computations easy.

\begin{claim}[\cite{HMT2022_1}, \cite{HMT2022_2}, \cite{GT2008}, \cite{GT2013}, \cite{GT2014}]
There exists a decomposition for $\mathbf{A}^{\left(p\right)}$ and $\mathbf{B}^{\left(q\right)}$ that satisfies the following very convenient properties:
\begin{multicols}{2}
\begin{enumerate}[1)]
\item $\int_{M^{\left(n\right)}}\boldsymbol{\beta}^{\left(q\right)}_{0}\star\mathbf{A}^{\left(p\right)}_{\mathbf{m}_{\mathbf{A}}} 
\underset{\Z}{=} 0$,
\item $\int_{M^{\left(n\right)}}\boldsymbol{\beta}^{\left(q\right)}_{0}\star\mathbf{A}^{\left(p\right)}_{\boldsymbol{\kappa}_{\mathbf{A}}} 
\underset{\Z}{=} 0$,
\item $\int_{M^{\left(n\right)}}\boldsymbol{\beta}^{\left(q\right)}_{\perp}\star\mathbf{A}^{\left(p\right)}_{\boldsymbol{\kappa}_{\mathbf{A}}}
\underset{\Z}{=} 0$,

\item $\int_{M^{\left(n\right)}}\boldsymbol{\beta}^{\left(q\right)}_{0}\star\boldsymbol{\alpha}^{\left(p\right)}_{\perp}\underset{\Z}{=} 0$,
\item $\int_{M^{\left(n\right)}}\boldsymbol{\beta}^{\left(q\right)}_{0}\star\boldsymbol{\alpha}^{\left(p\right)}_{0}\underset{\Z}{=} 0$,
\item $\int_{M^{\left(n\right)}}\mathbf{B}^{\left(q\right)}_{\mathbf{m}_{\mathbf{B}}}
\star\mathbf{A}^{\left(p\right)}_{\mathbf{m}_{\mathbf{A}}} 
\underset{\Z}{=} 0$
\item $\int_{M^{\left(n\right)}}\mathbf{B}^{\left(q\right)}_{\boldsymbol{\kappa}_{\mathbf{B}}}
\star\mathbf{A}^{\left(p\right)}_{\boldsymbol{\kappa}_{\mathbf{A}}} 
\underset{\Z}{=} - Q\left(\boldsymbol{\kappa}_{\mathbf{B}},\boldsymbol{\kappa}_{\mathbf{A}}\right)$
\end{enumerate}
\end{multicols}
\vspace{-0.25cm}
\noindent where $Q:\check{T}^{q+1}\times\check{T}^{p+1}\to\slfrac{\Q}{\Z}$ is a generalization of the so-called ``linking form'' \cite{MW2017}.
\end{claim}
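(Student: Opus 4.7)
The plan is to reduce each of the seven identities to an explicit modulo-$\Z$ computation, once sufficiently canonical representatives of the four summands of $\mathbf{A}^{(p)}$ and $\mathbf{B}^{(q)}$ are fixed. First I would fix a Riemannian metric on $M^{(n)}$ and apply Hodge theory. Using harmonic representatives of integer cohomology this yields: $\boldsymbol{\alpha}_0^{(p)}$ is a harmonic closed $p$-form with coefficients in a fundamental domain of $H^p\!\left(M^{(n)},\Z\right)_{\mathrm{free}}$; $\boldsymbol{\alpha}_\perp^{(p)}$ is coexact; $\mathbf{A}_{\boldsymbol{\kappa}_\mathbf{A}}^{(p)}$ is a \emph{flat} DB lift of $\boldsymbol{\kappa}_\mathbf{A}$; and $\mathbf{A}_{\mathbf{m}_\mathbf{A}}^{(p)}$ is a DB lift of $\mathbf{m}_\mathbf{A}\in\check{F}^{p+1}$ whose curvature is the harmonic integer representative of $\mathbf{m}_\mathbf{A}$. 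Analogous choices are made on the $\mathbf{B}$-side, with one critical extra freedom exploited below.

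The core tool is the identity
\begin{equation*}
\int_{M^{(n)}}\boldsymbol{\beta}^{(q)}\star\mathbf{A}^{(p)}\underset{\Z}{=}\int_{M^{(n)}}\beta\wedge F\!\left(\mathbf{A}^{(p)}\right),
\end{equation*}
valid whenever the $\mathbf{B}$-factor admits a global closed-form representative $\beta$. Properties $2$ and $3$ follow immediately from $F\!\left(\mathbf{A}_{\boldsymbol{\kappa}_\mathbf{A}}^{(p)}\right)=0$. Property $5$ follows since $F\!\left(\boldsymbol{\alpha}_0^{(p)}\right)=d\boldsymbol{\alpha}_0^{(p)}=0$. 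Property $4$ uses $F\!\left(\boldsymbol{\alpha}_\perp^{(p)}\right)=d\boldsymbol{\alpha}_\perp^{(p)}$ and Stokes: $\int_{M^{(n)}}\boldsymbol{\beta}_0^{(q)}\wedge d\boldsymbol{\alpha}_\perp^{(p)}=\pm\int_{M^{(n)}}d\boldsymbol{\beta}_0^{(q)}\wedge\boldsymbol{\alpha}_\perp^{(p)}=0$ by closedness of $\boldsymbol{\beta}_0^{(q)}$. Property $6$ follows because both $F(\mathbf{B}_{\mathbf{m}_\mathbf{B}}^{(q)})$ and $F(\mathbf{A}_{\mathbf{m}_\mathbf{A}}^{(p)})$ are harmonic integer forms, so their wedge integrates to an integer intersection number.

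The hard part will be property $1$: the naive Hodge-theoretic pairing $\int_{M^{(n)}}\boldsymbol{\beta}_0^{(q)}\wedge F(\mathbf{A}_{\mathbf{m}_\mathbf{A}}^{(p)})$ is the Pontryagin pairing $T^{b^q}\times\Z^{b^{p+1}}\to\R/\Z$, which is generically nonzero. The resolution exploits the remaining freedom in the lift of $\mathbf{B}_{\mathbf{m}_\mathbf{B}}^{(q)}$: changing this lift by a zero-mode $\delta_0$ shifts $\boldsymbol{\beta}_0^{(q)}\to\boldsymbol{\beta}_0^{(q)}-\delta_0$ and alters property $1$ by $-\int_{M^{(n)}}\delta_0\wedge F(\mathbf{A}_{\mathbf{m}_\mathbf{A}}^{(p)})\in\R/\Z$, while leaving properties $2$--$6$ unaffected (each is zero independently of zero-mode shifts: properties $2$, $3$, $6$ via curvatures, property $5$ via $F(\boldsymbol{\alpha}_0)=0$, property $4$ via Stokes again). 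Since the Pontryagin pairing is surjective onto $\R/\Z$, I can choose $\delta_0$ to cancel the Hodge integral, enforcing property $1$ without breaking the others. This is the content of the Cheeger--Chern--Simons-type splitting made precise in \cite{GT2008,GT2013,GT2014,HMT2022_1,HMT2022_2}.

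Finally, property $7$ identifies the DB pairing of two flat torsion classes with minus the linking form. Since both $\mathbf{A}_{\boldsymbol{\kappa}_\mathbf{A}}^{(p)}$ and $\mathbf{B}_{\boldsymbol{\kappa}_\mathbf{B}}^{(q)}$ are flat, the top-degree DB product reduces, via the Bockstein homomorphism associated with $0\to\Z\to\Q\to\Q/\Z\to 0$, to the classical cup pairing $\beta\!\left(\boldsymbol{\kappa}_\mathbf{B}\right)\smile\boldsymbol{\kappa}_\mathbf{A}\in H^n\!\left(M^{(n)},\Q/\Z\right)\cong\Q/\Z$, which is by definition---up to the sign prescribed by graded-commutativity of $\star$---the linking form $Q\!\left(\boldsymbol{\kappa}_\mathbf{B},\boldsymbol{\kappa}_\mathbf{A}\right)$.
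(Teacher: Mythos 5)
The paper itself does not prove this Claim (it is imported from the cited references), so your proposal has to stand on its own. Items 2)--5) and 7) are handled correctly: the reduction $\int_{M^{(n)}}\boldsymbol{\beta}\star\mathbf{A}\underset{\Z}{=}\int_{M^{(n)}}\beta\wedge F\left(\mathbf{A}\right)$ for a topologically trivial $\boldsymbol{\beta}$, the Stokes argument, and the Bockstein/cup-product description of the torsion linking pairing are all standard and in the spirit of \cite{GT2008,GT2013,GT2014}. The genuine gaps are in 1) and 6). For 6), you apply the core identity outside its range of validity: $\mathbf{B}^{\left(q\right)}_{\mathbf{m}_{\mathbf{B}}}$ is \emph{not} topologically trivial when $\mathbf{m}_{\mathbf{B}}\neq 0$, so $\int_{M^{(n)}}\mathbf{B}_{\mathbf{m}_{\mathbf{B}}}\star\mathbf{A}_{\mathbf{m}_{\mathbf{A}}}$ is not computed by wedging curvature representatives. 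It is a secondary quantity: shifting the lift $\mathbf{B}_{\mathbf{m}_{\mathbf{B}}}\mapsto\mathbf{B}_{\mathbf{m}_{\mathbf{B}}}+\delta_{0}$ by a flat zero mode leaves the curvature untouched but changes the pairing by $\int_{M^{(n)}}\delta_{0}\wedge F\left(\mathbf{A}_{\mathbf{m}_{\mathbf{A}}}\right)$, which is generically nonzero. Hence ``harmonic integral curvature'' does not pin down the lift and does not make the pairing integral; property 6 requires an actual choice of origins (e.g.\ prescribed holonomies along cycles Poincar{\'e} dual to a basis of $\check{F}^{p+1}$, or a dual cellular decomposition, as in the cited constructions). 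The same observation shows that your later assertion that the $\delta_{0}$-shift leaves property 6 unaffected is false: the shift you use to repair 1) damages 6) by exactly the quantity you are trying to cancel.

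For 1), the quantifier structure is misread: $\boldsymbol{\beta}^{\left(q\right)}_{0}$ is the zero-mode component of an \emph{arbitrary} field $\mathbf{B}^{\left(q\right)}$, ranging over the whole torus $\left(\slfrac{\R}{\Z}\right)^{b^{q}}$ (it is an integration variable of the partition function), so the identity would have to hold for every $\boldsymbol{\beta}_{0}$ and every $\mathbf{m}_{\mathbf{A}}\in\check{F}^{p+1}$. Since $\left(\boldsymbol{\beta}_{0},\mathbf{m}_{\mathbf{A}}\right)\mapsto\int_{M^{(n)}}\beta_{0}\wedge F\left(\mathbf{A}_{\mathbf{m}_{\mathbf{A}}}\right)$ is, by Poincar{\'e} duality, a non-degenerate pairing between that torus and the free lattice, no single shift $\delta_{0}$ (indeed no choice of decomposition at all) can make it vanish identically; your fix cancels it only for one particular value of the arguments, at the cost of 6). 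Note also that the paper itself keeps precisely this term alive: the Consequence retains $\mathbf{m}_{\mathbf{A}}\cdot\boldsymbol{\beta}^{\left(q\right)}_{0}$ and $\mathbf{m}_{\mathbf{B}}\cdot\boldsymbol{\alpha}^{\left(p\right)}_{0}$, and these cross terms are what generate the Kronecker deltas over the free sector in the heuristic computation. So either item 1) is to be understood as \emph{identifying} the zero-mode/free pairing with $\mathbf{m}_{\mathbf{A}}\cdot\boldsymbol{\beta}_{0}$ rather than annihilating it (in which case your ``hard part'' dissolves), or, as printed, it is in tension with the Consequence; in either reading, the mechanism you propose does not establish it, and a correct proof should instead fix the free and torsion origins via the geometric choices of the cited references and only then verify the list term by term.
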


\begin{consequence}
The previous properties imply that
\begin{align}
\nonumber
\int_{M^{\left(n\right)}}\mathbf{B}^{\left(q\right)}\star\mathbf{A}^{\left(p\right)}\underset{\Z}{=} 
&\int_{M^{\left(n\right)}}\boldsymbol{\beta}^{\left(q\right)}_{\perp}\star \mathbf{A}^{\left(p\right)}_{\mathbf{m}_{\mathbf{A}}} 
+ \int_{M^{\left(n\right)}}\mathbf{B}^{\left(q\right)}_{\mathbf{m}_{\mathbf{B}}}\star\boldsymbol{\alpha}^{\left(p\right)}_{\perp}
+ \int_{M^{\left(n\right)}}\boldsymbol{\beta}^{\left(q\right)}_{\perp}\star\boldsymbol{\alpha}^{\left(p\right)}_{\perp}\\
&+ \mathbf{m}_{\mathbf{B}}\cdot\boldsymbol{\alpha}^{\left(p\right)}_{0}
+ \mathbf{m}_{\mathbf{A}}\cdot\boldsymbol{\beta}^{\left(q\right)}_{0}
- Q\left(\boldsymbol{\kappa}_{\mathbf{B}},\boldsymbol{\kappa}_{\mathbf{A}}\right)
\end{align}
\end{consequence}

\textsc{Heuristic and motivational calculation.} We want to study functional integrals such as
\begin{align}
\label{Partition_Function}
\mbox{``}\,\,\mathcal{Z}^{p}_{\mathrm{BF}_{k}}\left(M^{\left(n\right)}\right)
=\frac{1}{\mathscr{N}^{p}_{\mathrm{BF}_{k}}\left(M^{\left(n\right)}\right)}
\int_{H^{q}_{\mathrm{DB}}\left(M^{\left(n\right)},\Z\right)\times H^{p}_{\mathrm{DB}}\left(M^{\left(n\right)},\Z\right)}
\hspace{-3.cm}\mathscr{D}\mathbf{B}^{\left(q\right)}\,\mathscr{D}\mathbf{A}^{\left(p\right)}\, 
e^{2\pi iS_{\mathrm{BF}_{k}}\left(\mathbf{A}^{\left(p\right)},\mathbf{B}^{\left(q\right)}\right)}\,\,\mbox{''}
\end{align} 
(the so-called ``partition function'') and
\begin{align}
\nonumber
\mbox{``}\,\,&\left\langle W\!\left(\mathbf{A}^{\left(p\right)}, z^{\mathbf{A}}_{\left(p\right)}, \mathbf{B}^{\left(q\right)}, z^{\mathbf{B}}_{\left(q\right)}\right)\right\rangle\\
\nonumber
&\hspace{1.cm}=\frac{1}{\mathscr{N}^{p}_{\mathrm{BF}_{k}}\left(M^{\left(n\right)}\right)}
\int_{H^{q}_{\mathrm{DB}}\left(M^{\left(n\right)},\Z\right)\times H^{p}_{\mathrm{DB}}\left(M^{\left(n\right)},\Z\right)}
\hspace{-2.cm}\mathscr{D}\mathbf{B}^{\left(q\right)}\,\mathscr{D}\mathbf{A}^{\left(p\right)}\\
&\hspace{4.5cm}W\!\left(\mathbf{A}^{\left(p\right)}, z^{\mathbf{A}}_{\left(p\right)}, 
\mathbf{B}^{\left(q\right)}, z^{\mathbf{B}}_{\left(q\right)}\right)\,
e^{2\pi iS_{\mathrm{BF}_{k}}\left(\mathbf{A}^{\left(p\right)},\mathbf{B}^{\left(q\right)}\right)}\,\,\mbox{''}
\end{align} 
(the so-called ``expectation value of observables'') which are ill-defined integrals.

We will focus here on the partition function \eqref{Partition_Function} and try to propose an appropriate definition. If we use in this integral the decomposition and the properties we recalled above, we can write
\begin{align}
\nonumber
\mbox{``}\,\,&\mathcal{Z}^{p}_{\mathrm{BF}_{k}}\left(M^{\left(n\right)}\right)\\
\nonumber
&= \frac{1}{\mathscr{N}_{\mathrm{BF}_{k}}\left(M^{\left(n\right)}\right)}
\sum_{\boldsymbol{\kappa}_{\mathbf{B}},\boldsymbol{\kappa}_{\mathbf{A}}\in\check{T}^{q+1}\times\check{T}^{p+1}}
e^{-2\pi ik Q\left(\boldsymbol{\kappa}_{\mathbf{B}},\boldsymbol{\kappa}_{\mathbf{A}}\right)}\\
&\hspace{0.5cm}\sum_{\mathbf{m}_{\mathbf{B}},\mathbf{m}_{\mathbf{A}}\in\check{F}^{q+1}\times\check{F}^{p+1}}
\int_{\left(\slfrac{\R}{\Z}\right)^{b^{q}}\times\left(\slfrac{\R}{\Z}\right)^{b^{p}}}
\hspace{-.1cm}\opd^{b^{q}}\boldsymbol{\beta}_{0}\,
\opd^{b^{p}}\boldsymbol{\alpha}_{0}\,
e^{2\pi ik\left(\int_{M^{\left(n\right)}}
\mathbf{m}_{\mathbf{B}}\cdot\boldsymbol{\alpha}^{\left(p\right)}_{0}
+ \mathbf{m}_{\mathbf{A}}\cdot\boldsymbol{\beta}^{\left(q\right)}_{0}\right)}\\
\nonumber
&\hspace{0.5cm}\int_{\slfrac{\Omega^{q}\left(M^{\left(n\right)}\right)}{\Omega^{q}_{\mathrm{cl}}\left(M^{\left(n\right)}\right)}
\times \slfrac{\Omega^{p}\left(M^{\left(n\right)}\right)}{\Omega^{p}_{\mathrm{cl}}\left(M^{\left(n\right)}\right)}}
\mathscr{D}\boldsymbol{\beta}^{\left(q\right)}_{\perp}\,\mathscr{D}\boldsymbol{\alpha}^{\left(p\right)}_{\perp}\\
\nonumber
&\hspace{1.5cm}e^{2\pi ik\left(\int_{M^{\left(n\right)}}
\boldsymbol{\beta}^{\left(q\right)}_{\perp}\star\mathbf{A}^{\left(p\right)}_{\mathbf{m}_{\mathbf{A}}} 
+ \int_{M^{\left(n\right)}}\mathbf{B}^{\left(q\right)}_{\mathbf{m}_{\mathbf{B}}}\star\boldsymbol{\alpha}^{\left(p\right)}_{\perp}
+ \int_{M^{\left(n\right)}}\boldsymbol{\beta}^{\left(q\right)}_{\perp}\star\boldsymbol{\alpha}^{\left(p\right)}_{\perp}\right)}
\,\,\mbox{''}
\end{align} 
and the integral over the topological sector $\left(\slfrac{\R}{\Z}\right)^{b^{q}}\times\left(\slfrac{\R}{\Z}\right)^{b^{p}}$ can be performed and gives
\begin{align}
\nonumber
\mbox{``}\,\,&\mathcal{Z}^{p}_{\mathrm{BF}_{k}}\left(M^{\left(n\right)}\right)\\
\nonumber
&= \frac{1}{\mathscr{N}_{\mathrm{BF}_{k}}\left(M^{\left(n\right)}\right)}
\sum_{\boldsymbol{\kappa}_{\mathbf{B}},\boldsymbol{\kappa}_{\mathbf{A}}\in\check{T}^{q+1}\times\check{T}^{p+1}}
e^{-2\pi ik Q\left(\boldsymbol{\kappa}_{\mathbf{B}},\boldsymbol{\kappa}_{\mathbf{A}}\right)}\\
&\hspace{0.5cm}\int_{\slfrac{\Omega^{q}\left(M^{\left(n\right)}\right)}{\Omega^{q}_{\mathrm{cl}}\left(M^{\left(n\right)}\right)}
\times \slfrac{\Omega^{p}\left(M^{\left(n\right)}\right)}{\Omega^{p}_{\mathrm{cl}}\left(M^{\left(n\right)}\right)}}
\mathscr{D}\boldsymbol{\beta}^{\left(q\right)}_{\perp}\,\mathscr{D}\boldsymbol{\alpha}^{\left(p\right)}_{\perp}
e^{2\pi ik\int_{M^{\left(n\right)}}\boldsymbol{\beta}^{\left(q\right)}_{\perp}\star\boldsymbol{\alpha}^{\left(p\right)}_{\perp}}\\
\nonumber
&\hspace{0.5cm}\sum_{\mathbf{m}_{\mathbf{B}},\mathbf{m}_{\mathbf{A}}\in F^{q+1}\times F^{p+1}}
\delta_{\mathbf{m}_{B}}\delta_{\mathbf{m}_{A}} e^{2\pi ik\left(\int_{M^{\left(n\right)}}
\boldsymbol{\beta}^{\left(q\right)}_{\perp}\star\mathbf{A}^{\left(p\right)}_{\mathbf{m}_{\mathbf{A}}} 
+ \int_{M^{\left(n\right)}}\mathbf{B}^{\left(q\right)}_{\mathbf{m}_{\mathbf{B}}}\star\boldsymbol{\alpha}^{\left(p\right)}_{\perp}\right)}
\,\,\mbox{''}
\end{align} 
$\delta_{\mathbf{m}_{B}}$ and $\delta_{\mathbf{m}_{A}}$ being regular Kronecker symbols. Then, we can sum over the topological sector $\check{F}^{q+1}\times\check{F}^{p+1}$, and we obtain
\begin{align}
\nonumber
\mbox{``}\,\,&\mathcal{Z}^{p}_{\mathrm{BF}_{k}}\left(M^{\left(n\right)}\right)\\
\nonumber
&= \frac{1}{\mathscr{N}_{\mathrm{BF}_{k}}\left(M^{\left(n\right)}\right)}
\sum_{\boldsymbol{\kappa}_{\mathbf{B}},\boldsymbol{\kappa}_{\mathbf{A}}\in\check{T}^{q+1}\times\check{T}^{p+1}}
e^{-2\pi ik Q\left(\boldsymbol{\kappa}_{\mathbf{B}},\boldsymbol{\kappa}_{\mathbf{A}}\right)}\\
&\hspace{0.5cm}\int_{\slfrac{\Omega^{q}\left(M^{\left(n\right)}\right)}{\Omega^{q}_{\mathrm{cl}}\left(M^{\left(n\right)}\right)}
\times \slfrac{\Omega^{p}\left(M^{\left(n\right)}\right)}{\Omega^{p}_{\mathrm{cl}}\left(M^{\left(n\right)}\right)}}
\mathscr{D}\boldsymbol{\beta}^{\left(q\right)}_{\perp}\,\mathscr{D}\boldsymbol{\alpha}^{\left(p\right)}_{\perp}
e^{2\pi ik\int_{M^{\left(n\right)}}\boldsymbol{\beta}^{\left(q\right)}_{\perp}\star\boldsymbol{\alpha}^{\left(p\right)}_{\perp}}
\,\,\mbox{''}
\end{align}
We observe here a convenient full decoupling of the two remaining topological sectors, $\slfrac{\Omega^{q}\left(M^{\left(n\right)}\right)}{\Omega^{q}_{\mathrm{cl}}\left(M^{\left(n\right)}\right)}\times\slfrac{\Omega^{p}\left(M^{\left(n\right)}\right)}{\Omega^{p}_{\mathrm{cl}}\left(M^{\left(n\right)}\right)}$ and $\check{T}^{q+1}\times\check{T}^{p+1}$. The contribution of the topological sector $\slfrac{\Omega^{q}\left(M^{\left(n\right)}\right)}{\Omega^{q}_{\mathrm{cl}}\left(M^{\left(n\right)}\right)}\times \slfrac{\Omega^{p}\left(M^{\left(n\right)}\right)}{\Omega^{p}_{\mathrm{cl}}\left(M^{\left(n\right)}\right)}$ is infinite dimensional and we choose to eliminate it\footnote{A lot of authors \cite{SCH1978,BBRT1991, AS1991} extract from this part the Reidemeister torsion of $M^{\left(n\right)}$. However, they do not work with the gauge classes of fields. They fix the gauge, and for that they usually introduce a metric, which they want afterwards to get rid of, as the theory is expected to be topological, \textit{i.e.} partition function and expectation values of observables should not depend on any metric.} thanks to the normalization, \textit{i.e.} by writing formally
\begin{align}
\nonumber
\mbox{``}\,\,&\mathscr{N}^{p}_{\mathrm{BF}_{k}}\left(M^{\left(n\right)}\right)\\
&= \int_{\slfrac{\Omega^{q}\left(M^{\left(n\right)}\right)}{\Omega^{q}_{\mathrm{cl}}\left(M^{\left(n\right)}\right)}
\times\slfrac{\Omega^{p}\left(M^{\left(n\right)}\right)}{\Omega^{p}_{\mathrm{cl}}\left(M^{\left(n\right)}\right)}}
\hspace{-0.25cm}\mathscr{D}\boldsymbol{\beta}^{\left(q\right)}_{\perp}\,\mathscr{D}\boldsymbol{\alpha}^{\left(p\right)}_{\perp}
e^{2\pi ik\int_{M^{\left(n\right)}}\boldsymbol{\beta}^{\left(q\right)}_{\perp}\star\boldsymbol{\alpha}^{\left(p\right)}_{\perp}}.
\,\,\mbox{''}
\label{Normalization}
\end{align}
Therefore, only remains the contribution of the sector $\check{T}^{q+1}\times\check{T}^{p+1}$ which we will regard as our \textit{definition} of the partition function $\mathcal{Z}^{p}_{\mathrm{BF}_{k}}\left(M^{\left(n\right)}\right)$.

\begin{definition}
In the following,
\begin{equation}
\mathcal{Z}^{p}_{\mathrm{BF}_{k}}\left(M^{\left(n\right)}\right) 
:= \sum_{\boldsymbol{\kappa}_{B},\boldsymbol{\kappa}_{A}\in\check{T}^{q+1}\times\check{T}^{p+1}}
e^{-2\pi i kQ\left(\boldsymbol{\kappa}_{B},\boldsymbol{\kappa}_{A}\right)}
\end{equation}
\end{definition}

\begin{proposition}
For the lens space $L\left(r,s\right)$ in dimension $3$, we have
\begin{equation}
\mathcal{Z}^{1}_{\mathrm{BF}_{k}}\left(L\left(r,s\right)\right)
=\mathrm{gcd}\!\left(r,k\right)r
= \abs{\check{T}^{2}}\abs{\Hom{\check{T}^{2}}{\slfrac{\Z}{k\Z}}}
\end{equation}
\end{proposition}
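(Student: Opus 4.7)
The plan is to specialize the general definition of $\mathcal{Z}^{1}_{\mathrm{BF}_{k}}$ to the lens space $L(r,s)$ and reduce the resulting double sum to an elementary character sum. First, I would recall that $\check{H}^{2}(L(r,s),\Z) \cong \slfrac{\Z}{r\Z}$ is pure torsion, so that $\check{T}^{2} \cong \slfrac{\Z}{r\Z}$ and $\check{F}^{2} = 0$. With $n = 3$ and $p = q = 1$, the defining formula collapses to
\begin{equation*}
\mathcal{Z}^{1}_{\mathrm{BF}_{k}}\!\left(L(r,s)\right) = \sum_{a,b \in \slfrac{\Z}{r\Z}} e^{-2\pi i k Q(b,a)}.
\end{equation*}

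Next, I would use the fact that the generalized linking form $Q$ on $\check{T}^{2}\times\check{T}^{2}$ for $L(r,s)$, under the above identification with $\slfrac{\Z}{r\Z}$, takes the shape $Q(b,a) \underset{\Z}{=} \frac{u\,ab}{r}$ for some unit $u$ of $\slfrac{\Z}{r\Z}$ (a representative of $\pm s^{\pm 1}$ modulo $r$, depending on conventions). Because $u$ is coprime to $r$, the inner sum over $b$ is the standard discrete-Fourier sum: it equals $r$ when $r \mid ka$ and vanishes otherwise.

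The residues $a \in \slfrac{\Z}{r\Z}$ satisfying $r \mid ka$ are precisely the multiples of $r/\gcd(r,k)$, of which there are exactly $\gcd(r,k)$. Summing over the surviving $a$'s therefore yields $\mathcal{Z}^{1}_{\mathrm{BF}_{k}}(L(r,s)) = r\cdot\gcd(r,k)$, which is the first claimed equality. For the second equality, I would invoke the elementary identity $\abs{\Hom{\slfrac{\Z}{r\Z}}{\slfrac{\Z}{k\Z}}} = \gcd(r,k)$ together with $\abs{\check{T}^{2}} = r$, giving $\abs{\check{T}^{2}}\cdot\abs{\Hom{\check{T}^{2}}{\slfrac{\Z}{k\Z}}} = r\gcd(r,k)$.

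The only nontrivial step is the identification of the linking form: one must verify that, under the chosen isomorphism $\check{T}^{2} \cong \slfrac{\Z}{r\Z}$, the form inherited from the Deligne-Beilinson pairing agrees with the classical torsion linking pairing of $L(r,s)$ up to multiplication by a unit of $\slfrac{\Z}{r\Z}$, which does not affect the final value of the sum via the change of variables $b \mapsto u^{-1}b$. Once this identification is granted, the remainder of the argument is a one-line Gauss-type computation.
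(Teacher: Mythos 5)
Your proposal is correct and follows essentially the same route as the paper's proof: identify $\check{T}^{2}\cong\slfrac{\Z}{r\Z}$ with the (unit-rescaled) linking form $\tfrac{s}{r}ab$, and evaluate the double sum by the standard orthogonality/Gauss-sum argument, the surviving terms giving $r\cdot\gcd(r,k)$. Your version is if anything slightly more careful, since you explicitly dispose of the unit by a change of variables and justify the second equality via $\abs{\Hom{\slfrac{\Z}{r\Z}}{\slfrac{\Z}{k\Z}}}=\gcd(r,k)$, which the paper simply asserts.
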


\begin{proof}
For a lens space $L\left(r,s\right)$, we know that $Q = \frac{r}{s}$, so
\begin{align}
\mathcal{Z}^{1}_{\mathrm{BF}_{k}}\left(L\left(r,s\right)\right) 
= \sum_{n,m = 0}^{r-1} e^{-2\pi i k\frac{s}{r}mn}
= \sum_{n,m = 0}^{r-1} e^{-2\pi i k'\frac{s}{r'}mn}
\end{align}
where $k = k'\mathrm{gcd}\!\left(r,k\right)$ and $r = r'\mathrm{gcd}\!\left(r,k\right)$, then
\begin{align}
\mathcal{Z}^{1}_{\mathrm{BF}_{k}}\left(L\left(r,s\right)\right) 
=& \sum_{n = 0}^{r-1} \left(\sum_{m = 0}^{r-1}e^{-2\pi i k'\frac{s}{r'}n}\right)^{m}
\end{align}
and the inner parenthesis is $0$ whenever $r'\nmid n$ and $r$ whenever $r'\mid n$, whence
\begin{align}
\mathcal{Z}^{1}_{\mathrm{BF}_{k}}\left(M^{\left(n\right)}\right)
=\mathrm{gcd}\!\left(r,k\right)r
=\abs{\check{T}^{2}}\abs{\Hom{\check{T}^{2}}{\slfrac{\Z}{k\Z}}}
\end{align}
\end{proof}

\begin{proposition}
More generally, using the properties of $Q$, we can show that
\begin{equation}
\mathcal{Z}^{p}_{\mathrm{BF}_{k}}\left(M^{\left(n\right)}\right) 
= \prod_{i=1}^{t^{p+1}}\mathrm{gcd}\!\left(\zeta^{p+1}_{i},k\right)\zeta^{p+1}_{i}
= \abs{\check{T}^{p+1}}\abs{\Hom{\check{T}^{p+1}}{\slfrac{\Z}{k\Z}}}
\end{equation}
\end{proposition}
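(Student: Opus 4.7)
The plan is to reduce the general exponential sum to a product of lens-space-like sums, one for each cyclic factor in the invariant-factor decomposition of the torsion. The key input is the non-degeneracy of the linking form $Q$ together with Poincaré duality, which forces $\check{T}^{q+1}$ and $\check{T}^{p+1}$ to share the same invariant factors $\zeta^{p+1}_{1},\ldots,\zeta^{p+1}_{t^{p+1}}$, and which permits diagonalizing $Q$ with respect to a well-chosen set of generators.

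First I would invoke Poincaré duality for the closed oriented $n$-manifold $M^{\left(n\right)}$ with $n = p+q+1$: the linking form $Q:\check{T}^{q+1}\times\check{T}^{p+1}\to\slfrac{\Q}{\Z}$ is non-degenerate, hence it induces a group isomorphism $\check{T}^{q+1}\cong\Hom{\check{T}^{p+1}}{\slfrac{\Q}{\Z}}\cong\check{T}^{p+1}$. In particular $\check{T}^{q+1}$ decomposes as $\slfrac{\Z}{\zeta^{p+1}_{1}\Z}\oplus\cdots\oplus\slfrac{\Z}{\zeta^{p+1}_{t^{p+1}}\Z}$ as well. Next, appealing to the classification of non-degenerate linking forms on finite abelian groups, I would pick generators $\left(e_{i}\right)$ of the cyclic factors of $\check{T}^{p+1}$ and dual generators $\left(f_{i}\right)$ of $\check{T}^{q+1}$ so that $Q$ becomes block-diagonal, namely $Q\!\left(f_{i},e_{j}\right) = \frac{s_{i}}{\zeta^{p+1}_{i}}\delta_{ij}$ with $\gcd\!\left(s_{i},\zeta^{p+1}_{i}\right) = 1$.

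Writing $\boldsymbol{\kappa}_{\mathbf{B}} = \sum_{i}n_{i}f_{i}$ and $\boldsymbol{\kappa}_{\mathbf{A}} = \sum_{i}m_{i}e_{i}$, the exponential sum defining $\mathcal{Z}^{p}_{\mathrm{BF}_{k}}\left(M^{\left(n\right)}\right)$ factors as
\begin{equation}
\mathcal{Z}^{p}_{\mathrm{BF}_{k}}\!\left(M^{\left(n\right)}\right)
= \prod_{i=1}^{t^{p+1}}\left(\sum_{m_{i},n_{i}=0}^{\zeta^{p+1}_{i}-1}e^{-2\pi i k\frac{s_{i}}{\zeta^{p+1}_{i}}m_{i}n_{i}}\right).
\end{equation}
Each factor is exactly the sum already evaluated in the previous proposition with $r = \zeta^{p+1}_{i}$ and $s = s_{i}$; using that $\gcd\!\left(s_{i},\zeta^{p+1}_{i}\right) = 1$, so $\gcd\!\left(\zeta^{p+1}_{i},ks_{i}\right) = \gcd\!\left(\zeta^{p+1}_{i},k\right)$, the inner geometric sum over $n_{i}$ vanishes unless $\frac{\zeta^{p+1}_{i}}{\gcd\left(\zeta^{p+1}_{i},k\right)}$ divides $m_{i}$, which yields $\gcd\!\left(\zeta^{p+1}_{i},k\right)\zeta^{p+1}_{i}$ per factor.

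Finally, I would reinterpret the product: $\abs{\check{T}^{p+1}} = \prod_{i}\zeta^{p+1}_{i}$, while $\Hom{\check{T}^{p+1}}{\slfrac{\Z}{k\Z}} \cong \bigoplus_{i}\Hom{\slfrac{\Z}{\zeta^{p+1}_{i}\Z}}{\slfrac{\Z}{k\Z}} \cong \bigoplus_{i}\slfrac{\Z}{\gcd\left(\zeta^{p+1}_{i},k\right)\Z}$ has cardinality $\prod_{i}\gcd\!\left(\zeta^{p+1}_{i},k\right)$, which packages the answer into the compact form claimed. The main obstacle is the diagonalization step: a generic linking form on a finite abelian group is not literally diagonal, and one has to use the fact that it can be brought into the block form above (up to possibly regrouping $p$-primary components). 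Once that normal form is granted, the computation factorizes and the lens-space lemma does the rest.
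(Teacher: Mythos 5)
Your argument is correct and follows exactly the route the paper has in mind: the paper states the proposition without proof, appealing to ``the properties of $Q$'', and your reduction --- non-degeneracy of $Q$ via Poincar\'e duality, diagonalization with respect to suitable generators, factorization into cyclic blocks, and the Gauss-sum evaluation already done for $L\!\left(r,s\right)$ --- is precisely the intended generalization of the lens-space computation. One remark: the ``main obstacle'' you flag is not actually an obstacle here, because $\boldsymbol{\kappa}_{\mathbf{B}}$ and $\boldsymbol{\kappa}_{\mathbf{A}}$ are summed \emph{independently}, so you only need to diagonalize a non-degenerate bilinear pairing $\check{T}^{q+1}\times\check{T}^{p+1}\to\slfrac{\Q}{\Z}$ between two (a priori different) finite abelian groups; Pontryagin duality then lets you take the $f_{i}$ to be the preimages of the dual basis characters, giving $Q\!\left(f_{i},e_{j}\right)=\delta_{ij}/\zeta^{p+1}_{j}$ (so even $s_{i}=1$), and the classification of symmetric linking forms with its indecomposable $2$-primary blocks is never needed --- even in the self-paired case $p=q$.
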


\begin{remark}
The \v{C}ech homology (respectively cohomology) of $M^{\left(n\right)}$ is isomorphic to the simplicial homology (respectively cohomology) of $M^{\left(n\right)}$, its singular homology (respectively cohomology) and its cellular homology (respectively cohomology). Hence, from now on, we will denote indifferently those groups $H_{p}\left(M^{\left(n\right)},\Z\right)\cong F_{p}\oplus T_{p}$ (respectively $H^{p}\left(M^{\left(n\right)},\Z\right)\cong F^{p}\oplus T^{p}$.) 

Two other isomorphisms will be very useful in the following:
\begin{item}
\item[-] $T^{p+1}\cong T_{n-p-1} = T_{q}$ (Poincar{\'e} Duality),
\item[-] $T^{p+1}\cong T_{p}$ (Universal Coefficient Theorem).
\end{item}

Later on, we will prefer to write
\begin{equation}
\mathcal{Z}^{p}_{\mathrm{BF}_{k}}\left(M^{\left(n\right)}\right)
= \abs{T_{p}}\abs{\Hom{T_{p}}{\slfrac{\Z}{k\Z}}}.
\end{equation}
\end{remark}

\section{Generalized Abelian TV invariant}

%

\subsection{The original $\mathcal{U}_{q}\!\left(\mathfrak{sl}_{2}\!\left(\mathbb{C}\right)\right)$ 3D case \cite{TV1992}}

In this subsection, $M^{\left(3\right)}$ is a closed connected oriented smooth $3$-manifold.

In this subsection, and in this subsection only, $q$ is a root of unity. Unlike $\mathfrak{sl}_{2}\!\left(\mathbb{C}\right)$, the set of isomorphism classes of representations of $\mathcal{U}_{q}\!\left(\mathfrak{sl}_{2}\!\left(\mathbb{C}\right)\right)$, quantum deformation of parameter $q$ of the enveloping algebra of $\mathfrak{sl}_{2}\!\left(\mathbb{C}\right)$, is finite. Through misuse of language, we will identify the set of isomorphism classes of representations of $\mathcal{U}_{q}\!\left(\mathfrak{sl}_{2}\!\left(\mathbb{C}\right)\right)$ with its (finite) indexing set $I_{q}$. 

Introduce the quantities $w_{i} = \mathrm{dim}\left(i\right)$ and $w^{2} = \sum\limits_{i\in I}w^{2}_{i}$ that are necessary for the construction of the invariant below. The exact expressions of $I_{q}$, $w_{i}$ and $w^{2}$ as a function of $q$ are specified in \cite{TV1992} but are irrelevant for what we want to show here.

Consider now a triangulation of $M^{\left(3\right)}$, \textit{i.e.} the collection of a (finite) set of vertices $C_{0}$, a (finite) set of edges $C_{1}$ connecting those vertices, a (finite) set of triangular faces $C_{2}$ bounded by those edges, and a (finite) set of tetrahedra $C_{3}$ bounded by those faces.

Define the set of $p$-labelings to be $C^{p}_{q} = \left\lbrace l^{\left(p\right)}:C_{p}\longrightarrow I_{q}\right\rbrace$. We will be mostly interested here in the case $p=1$ (labeling of edges). Given a labeling $l^{\left(1\right)}\in C^{1}_{q}$, each tetrahedron $e_{\left(3\right)}\in C_{3}$ becomes a labeled tetrahedron $e^{l^{\left(1\right)}}_{3}$, to which we associate a number $\abs{e_{\left(3\right)}^{l^{\left(1\right)}}}\in\mathbb{C}$, which is here a quantum $6j$-symbol, see Figure \ref{Tetraedre_Uqsl2}.

\begin{figure}
\begin{center}
\includegraphics[scale = 0.75]{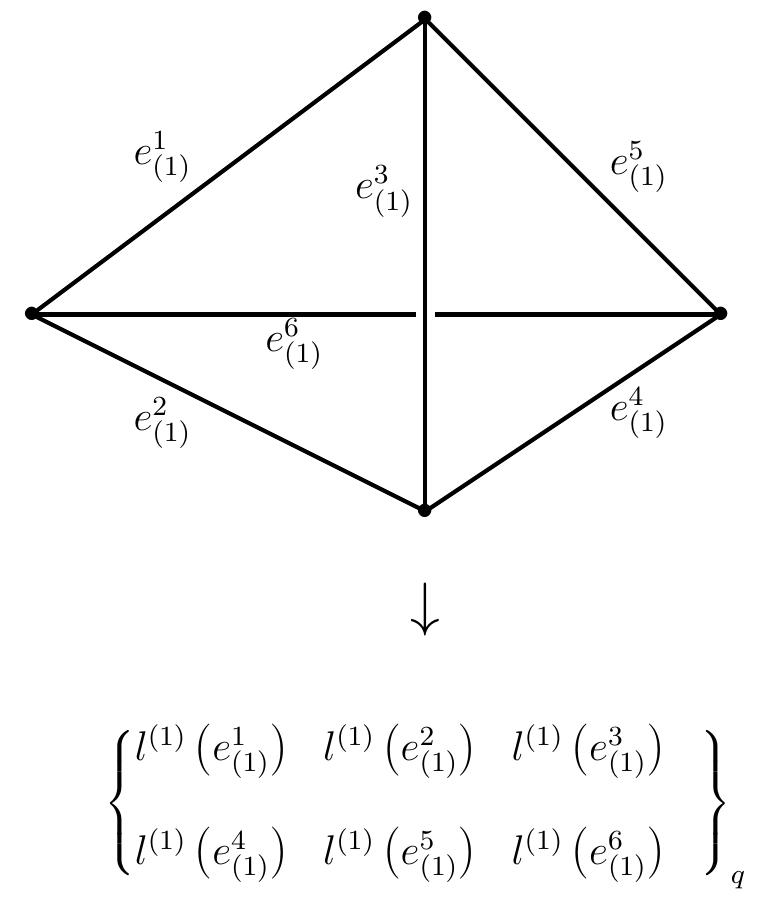}
\caption[]{Quantum $6j$-symbol associated with a tetrahedron in the $\mathcal{U}_{q}\!\left(\mathfrak{sl}_{2}\!\left(\mathbb{C}\right)\right)$ case.}
\label{Tetraedre_Uqsl2}
\end{center}
\end{figure}

\begin{remark}
For a closed manifold $M^{\left(3\right)}$, any triangular face belongs to two tetrahedra. If, from the viewpoint of one tetrahedron, a given face is labeled by $i$, $j$ and $k$, then from the viewpoint of the other tetrahedron to which belongs this face, it is labeled by $i^{*}$, $j^{*}$ and $k^{*}$, where $i^{*}$ means the dual representation of the $i$-th representation.
\end{remark}

Finally, we have:
\begin{theorem}[V. Turaev, O. Viro \cite{TV1992}]
The quantity
\begin{equation}
\label{Z_TV_Uqsl2}
\mathcal{Z}_{\mathrm{TV}_{q}}\left(M^{\left(3\right)}\right)
=\frac{1}{w^{2\abs{C_{0}}}}\sum\limits_{l^{\left(1\right)}\in C^{1}_{q}}
\left(\left(\prod\limits_{e_{\left(1\right)}\in C_{1}}w_{l^{\left(1\right)}\left(e_{\left(1\right)}\right)}\right)
\left(\prod\limits_{e_{\left(3\right)}\in C_{3}}\abs{e^{l^{\left(1\right)}}_{\left(3\right)}}\right)\right)\in\mathbb{C}
\end{equation}
is independent of the triangulation and defines thus an invariant of $M^{\left(3\right)}$.
\end{theorem}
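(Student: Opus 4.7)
The plan is to invoke Pachner's theorem: any two triangulations of a closed PL $3$-manifold are related by a finite sequence of Pachner moves, which in dimension three reduce to the $2$-$3$ move (two tetrahedra sharing a common face are replaced by three tetrahedra sharing a common edge) and the $1$-$4$ move (a tetrahedron is subdivided into four by adding one interior vertex). Since the expression \eqref{Z_TV_Uqsl2} depends only on the combinatorial data $(C_0, C_1, C_2, C_3)$ and on the labeling $l^{(1)}:C_1\to I_q$, it is manifestly independent of the ordering of the simplices; hence it suffices to verify invariance under each of these two local moves.

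For the $2$-$3$ move, $\abs{C_0}$ is unchanged so the prefactor $w^{-2\abs{C_0}}$ plays no role; the move creates one new edge (whose label is summed over $I_q$) and replaces two tetrahedra by three. After fixing the labels on the boundary of the affected region and collecting one $w_i$-factor together with three quantum $6j$-symbols on the new side against two quantum $6j$-symbols on the old side, the required identity reduces, for each choice of boundary labels, to the Biedenharn--Elliott (pentagon) identity satisfied by the quantum $6j$-symbols of $\mathcal{U}_q(\mathfrak{sl}_2(\mathbb{C}))$, which is a structural property of this representation category. For the $1$-$4$ move, $\abs{C_0}$ increases by one, so the prefactor contributes an extra $w^{-2}$; the move introduces four new edges (summed over $I_q$) and replaces one tetrahedron by four. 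Combining the pentagon identity with the orthogonality/completeness relation for quantum $6j$-symbols and the identity $\sum_{i\in I_q} w_i^2 = w^2$, the sum over the new labels produces exactly a compensating factor of $w^2$, which cancels the change in normalization. This precise cancellation is what dictates the choice of prefactor $w^{-2\abs{C_0}}$.

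The main obstacle will be the $1$-$4$ move. Whereas the $2$-$3$ move is essentially a single application of the pentagon identity, the $1$-$4$ move requires chaining several $6j$-symbol identities together while carefully tracking the edge weights $w_i$, the dualization conventions for shared faces (as pointed out in the remark above), and the contribution of the four new edges summed over $I_q$, in order to extract cleanly the factor $w^2$ that balances the change in the normalization prefactor. Once both moves have been verified, Pachner's theorem immediately yields the triangulation-independence of $\mathcal{Z}_{\mathrm{TV}_q}(M^{(3)})$, and hence the theorem.
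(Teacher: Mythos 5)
This theorem is imported from the literature: the paper states it with the citation to Turaev--Viro and offers no proof of its own, so there is no in-paper argument to compare yours against. Your sketch follows the now-standard route, namely Pachner's theorem (any two triangulations of a closed PL $3$-manifold, hence of a smooth one, are related by $2$-$3$ and $1$-$4$ bistellar moves) plus verification of the two local moves, with the $2$-$3$ move reducing to the Biedenharn--Elliott (pentagon) identity and the $1$-$4$ move to pentagon combined with the orthogonality/completeness relation and $\sum_{i\in I_{q}}w_{i}^{2}=w^{2}$, the latter explaining the normalization $w^{-2\abs{C_{0}}}$. This is correct in outline and is essentially the proof given by Barrett and Westbury in the spherical-category setting; it differs from the original Turaev--Viro argument, which did not invoke Pachner moves but instead passed to the dual picture of special spines/simple $2$-polyhedra and proved invariance under the Matveev--Piergallini (and Alexander-type) moves, an approach that also yields invariance for manifolds with boundary and for more general cell structures. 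Two caveats on your sketch, neither fatal: all the actual work is hidden in the invoked $6j$-symbol identities (tetrahedral symmetry, pentagon, orthogonality, and nondegeneracy $w^{2}\neq 0$ at a root of unity), which you should at least state precisely with the admissibility conventions, since the Kronecker-type vanishing of $\abs{e^{l^{(1)}}_{(3)}}$ on non-admissible labelings is what makes the unrestricted sum over $C^{1}_{q}$ legitimate; and in most treatments the $1$-$4$ move is not verified directly but decomposed into a $2$-$3$ move followed by a $0$-$2$ (lune/bubble) move, which is exactly where the orthogonality relation and the factor $w^{2}$ enter, so organizing the computation that way will make the bookkeeping of the four new edge labels and the dual labels on shared faces considerably cleaner.
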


Barrett and Westbury \cite{BW1996} extended the construction to spherical categories, \textit{i.e.} monoidal categories with a notion of duality. Balsam and Kirillov \cite{BK2010} extended in their turn the construction to $3$-manifolds provided with a polyhedral decomposition, \textit{i.e.} a cellular decomposition obtained from a triangulation by removing vertices, edges and faces.

\subsection{The 3D $\slfrac{\Z}{k\Z}$ case}

In this subsection, $M^{\left(3\right)}$ is still a closed connected oriented smooth $3$-manifold provided with a triangulation. For further convenience, let us write this triangulation as a chain complex:
\begin{equation}
\label{chain_complex_cells}
C_{3}\overset{\oppartial_{3}}{\longrightarrow}
C_{2}\overset{\oppartial_{2}}{\longrightarrow}
C_{1}\overset{\oppartial_{1}}{\longrightarrow}
C_{0}
\end{equation}
where $\oppartial_{i}$ is the obvious boundary operator.

The construction in the previous section relies on the fact that $I_{q}$ is finite, which is made possible thanks to the quantum deformation of  the enveloping algebra of $\mathfrak{sl}_{2}\!\left(\mathbb{C}\right)$. But if we want an Abelian version of this construction, we realize immediately that this quantum deformation trick does not work for $\mathfrak{u}\!\left(1\right)$. However, there is at least one other way to extract from $\mathrm{U}\!\left(1\right)$ a finite set of representations with a ``cutting parameter'' $k$: We may consider $\slfrac{\Z}{k\Z}$, which is isomorphic to a finite subgroup of $\mathrm{U}\!\left(1\right)$. 

We will not derive the details from the theory of spherical categories here, but we will refer to \cite{MT2016JMP,MT2016NPB} and recall that
\begin{itemize}
\item[-] $I_{q}$ in the $\mathcal{U}_{q}\!\left(\mathfrak{sl}_{2}\!\left(\mathbb{C}\right)\right)$ case is replaced by $I_{k} = \left\lbrace 0,\hdots,\left(k-1\right)\right\rbrace$ in the $\slfrac{\Z}{k\Z}$ case,
\item[-] to the chain complex \eqref{chain_complex_cells}, we can associate the cochain complex of the $p$-labelings $C^{p}_{k} = \left\lbrace l^{\left(p\right)}:C_{p}\longrightarrow I_{k}\right\rbrace$:
\begin{equation}
\label{cochain_complex_cells}
C_{0}\overset{\opd_{0}}{\longrightarrow}
C_{1}\overset{\opd_{1}}{\longrightarrow}
C_{2}\overset{\opd_{2}}{\longrightarrow}
C_{3}
\end{equation}
where $\opd$ is defined by duality as
\begin{equation}
\left(\opd^{p}l^{\left(p\right)}\right)\left(e_{\left(p+1\right)}\right):= l^{\left(p\right)}\left(\oppartial_{p+1}e_{\left(p+1\right)}\right)
\end{equation}
for $l^{\left(p\right)}\in C^{p}_{k}$ and $e_{\left(p+1\right)}\in C_{p+1}$. Hence, in particular, $\left(\opd^{1} l^{\left(1\right)}\right)$ becomes a labeling of the faces,
\item[-] the quantum $6j$-symbols associated with a tetrahedron in the $\mathcal{U}_{q}\!\left(\mathfrak{sl}_{2}\!\left(\mathbb{C}\right)\right)$ case are replaced here by a product of Kronecker symbols modulo $k$\footnote{What we mean by ``Kronecker symbols modulo $k$'' is $\opdelta^{\left[k\right]}_{a} = \left\lbrace\begin{tabular}{ll} $1$ & if $a\equiv 0\left[k\right]$ \\ $0$ & otherwise\end{tabular}\right.$.}, each Kronecker symbol representing the colored edges of a face of the tetrahedron considered:
\begin{align}
\nonumber
\abs{e^{l^{\left(1\right)}}_{\left(3\right)}} =&
\opdelta^{\left[k\right]}_{l^{\left(1\right)}\left(e^{1}_{\left(1\right)}\right)+l^{\left(1\right)}\left(e^{2}_{\left(1\right)}\right)-l^{\left(1\right)}\left(e^{3}_{\left(1\right)}\right)}
\opdelta^{\left[k\right]}_{l^{\left(1\right)}\left(e^{3}_{\left(1\right)}\right)+l^{\left(1\right)}\left(e^{4}_{\left(1\right)}\right)-l^{\left(1\right)}\left(e^{5}_{\left(1\right)}\right)}\\
&\opdelta^{\left[k\right]}_{l^{\left(1\right)}\left(e^{5}_{\left(1\right)}\right)-l^{\left(1\right)}\left(e^{6}_{\left(1\right)}\right)-l^{\left(1\right)}\left(e^{1}_{\left(1\right)}\right)}
\opdelta^{\left[k\right]}_{l^{\left(1\right)}\left(e^{6}_{\left(1\right)}\right)-l^{\left(1\right)}\left(e^{4}_{\left(1\right)}\right)-l^{\left(1\right)}\left(e^{2}_{\left(1\right)}\right)}
\end{align}
or, in a more condensed notation:
\begin{equation}
\abs{e_{\left(3\right)}^{l^{\left(1\right)}}} 
= \prod\limits_{e_{\left(2\right)}\in C_{2}\,\vert\,e_{\left(2\right)}\subset\oppartial_{3}e_{\left(3\right)}}
\opdelta^{\left[k\right]}_{l^{\left(1\right)}\left(\oppartial_{2}e_{\left(2\right)}\right)},
\end{equation}
see Figure \ref{Tetraedre_ZkZ}.
\end{itemize}

\begin{figure}
\begin{center}
\includegraphics[scale = 0.75]{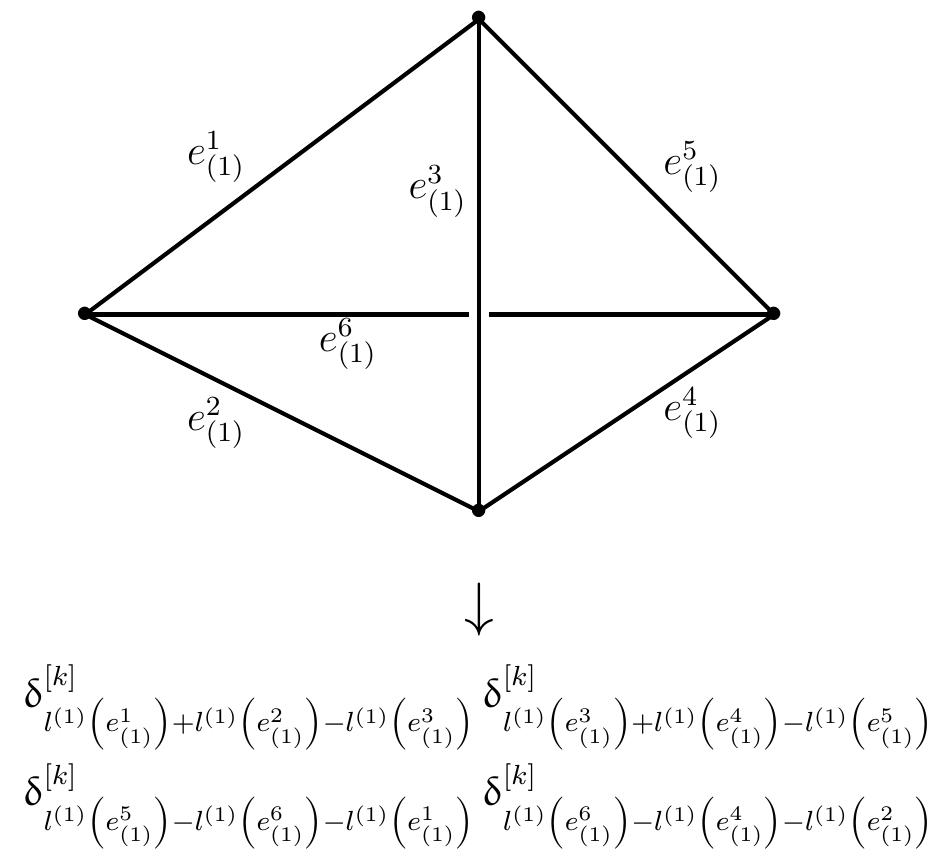}
\caption[]{Kronecker symbols modulo $k$ associated with a tetrahedron in the $\slfrac{\Z}{k\Z}$ case.}
\label{Tetraedre_ZkZ}
\end{center}
\end{figure}

Hence, knowing that for all $i\in I_{k}$, $w_{i} = 1$, we can adapt the formula of the TV invariant \eqref{Z_TV_Uqsl2} to the Abelian case\footnote{The superscript $1$ in $\mathcal{Z}^{1}_{\mathrm{TV}_{k}}\left(M^{\left(3\right)}\right)$ will be explained in the next section.}:
\begin{definition}
We set
\begin{align}
\nonumber
\mathcal{Z}^{1}_{\mathrm{TV}_{k}}\left(M^{\left(3\right)}\right)
=& \frac{1}{k^{\abs{C_{0}}-1}}\sum\limits_{l^{\left(1\right)}\in C^{1}_{k}}
\prod\limits_{e_{\left(3\right)}\in C_{3}}\abs{e_{\left(3\right)}^{l^{\left(1\right)}}}\\
=& \frac{1}{k^{\abs{C_{0}}-1}}\sum\limits_{l^{\left(1\right)}\in C^{1}_{k}}
\prod\limits_{e_{\left(3\right)}\in C_{3}}
\prod\limits_{e_{\left(2\right)}\in C_{2}\,\vert\,e_{\left(2\right)}\subset\oppartial_{3}e_{\left(3\right)}}
\opdelta^{\left[k\right]}_{l^{\left(1\right)}\left(\oppartial_{2}e_{\left(2\right)}\right)}\in\mathbb{C}
\end{align}

Note that the gluing condition of two faces along a given edge translates algebraically into the fact that this edge has to carry a representation $i$ from the viewpoint of one face, and $i^{*}$ from the viewpoint of the other face. Here, the duality is simply $i^{*} \equiv \left(k-i\right) \,\left[k\right]$ or $i^{*} \equiv-i \,\left[k\right]$ (strictly speaking, in our settings, $-i\notin I_{k}$ but $\left(k-i\right)\in I_{k}$.) So in the definition above, each face appears twice, with one orientation and the opposite one. But the Kronecker symbol does not make a difference between the two orientations, so we may simply write:
\begin{equation}
\label{Def_TV_ab_3D}
\mathcal{Z}^{1}_{\mathrm{TV}_{k}}\left(M^{\left(3\right)}\right)
= \frac{1}{k^{\abs{C_{0}}-1}}\sum\limits_{l^{\left(1\right)}\in C^{1}_{k}}
\prod\limits_{e_{\left(2\right)}\in C_{2}}\opdelta^{\left[k\right]}_{l^{\left(1\right)}\left(\oppartial_{2} e_{\left(2\right)}\right)}\in\mathbb{C}
\end{equation}
\end{definition}

\begin{remark}
We realize that the assumption that we have \textit{stricto sensu} a triangulation of $M^{\left(3\right)}$ is not necessary and that a cellular decomposition is sufficient, although to make sense of formula \eqref{Def_TV_ab_3D}, $C_{2}$ cannot be empty (but we will come back to this point at the end of this section.).
\end{remark}

\begin{remark}
Anticipating one of the next result, it is a good point to recall an important fact. The Universal Coefficient Theorem \cite{BT1982} claims that, for any abelian group $G$,
\begin{equation}
H^{p}\left(M^{\left(n\right)},G\right)\simeq\Hom{H_{p}\left(M^{\left(n\right)},\Z\right)}{G}\oplus\Ext{H_{p-1}\left(M^{\left(n\right)},\Z\right)}{G},
\end{equation}
\textit{i.e.} in general, the cohomology with coefficients in $G$ is \textit{not} the group of $G$-valued homomorphisms over the homology (with coefficients in $\Z$). However, if $G$ is divisible, then $\Ext{H_{p-1}\left(M^{\left(n\right)},\Z\right)}{G}$ is trivial and as a consequence, \textit{in this case}, the cohomology with coefficients in $G$ \textit{is} the group of $G$-valued homomorphisms over the homology (with coefficients in $\Z$), \textit{i.e.}
\begin{equation}
H^{p}\left(M^{\left(n\right)},G\right)\cong\Hom{H_{p}\left(M^{\left(n\right)},\Z\right)}{G}.
\end{equation}
Moreover, the group $\slfrac{\Z}{k\Z}$ is a divisible group. Hence, the group of cohomology modulo $k$ of degree $p$ is
\begin{equation}
H^{p}_{k} := H^{p}\left(M^{\left(n\right)},\slfrac{\Z}{k\Z}\right)\simeq\Hom{H_{p}\left(M^{\left(n\right)},\Z\right)}{\slfrac{\Z}{k\Z}}.
\end{equation}
\end{remark}

\begin{claim}
\label{Z_TV_ab_3D_H_k}
An equivalent definition of $\mathcal{Z}^{1}_{\mathrm{TV}_{k}}\left(M^{\left(3\right)}\right)$ is
\begin{equation}
\label{Z_TV_ab_3D}
\mathcal{Z}^{1}_{\mathrm{TV}_{k}}\left(M^{\left(3\right)}\right) 
= \frac{1}{k^{\abs{C_{0}}-1}}\sum\limits_{l^{\left(1\right)}\in C^{1}_{k}}
\prod\limits_{e_{\left(2\right)}\in C_{2}}\opdelta^{\left[k\right]}_{\left(\opd^{1}l^{\left(1\right)}\right)\left(e_{\left(2\right)}\right)}\in\mathbb{C}
\end{equation}
from which we can show that
\begin{equation}
\label{Z_TV_ab_3D}
\mathcal{Z}^{1}_{\mathrm{TV}_{k}}\left(M^{\left(3\right)}\right) 
= k\frac{\abs{H^{1}_{k}}}{\abs{H^{0}_{k}}}
\end{equation}
\end{claim}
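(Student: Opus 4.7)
The first step is to unpack the two forms of the definition. By the very definition of $\opd$ as the transpose of $\oppartial$, one has $\left(\opd^{1}l^{\left(1\right)}\right)\left(e_{\left(2\right)}\right) = l^{\left(1\right)}\left(\oppartial_{2}e_{\left(2\right)}\right)$, so the two expressions for $\mathcal{Z}^{1}_{\mathrm{TV}_{k}}\left(M^{\left(3\right)}\right)$ are tautologically equal. The advantage of the second form is cohomological: the product $\prod_{e_{\left(2\right)}\in C_{2}}\opdelta^{\left[k\right]}_{\left(\opd^{1}l^{\left(1\right)}\right)\left(e_{\left(2\right)}\right)}$ equals $1$ when $\opd^{1}l^{\left(1\right)} \equiv 0 \,\left[k\right]$ on every $2$-cell, and $0$ otherwise. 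Hence the sum collapses to a counting of cocycles:
\begin{equation*}
\mathcal{Z}^{1}_{\mathrm{TV}_{k}}\left(M^{\left(3\right)}\right) = \frac{\abs{Z^{1}_{k}}}{k^{\abs{C_{0}}-1}},
\end{equation*}
where $Z^{1}_{k} = \noyau \opd^{1} \subseteq C^{1}_{k}$ denotes the group of $\slfrac{\Z}{k\Z}$-valued cellular $1$-cocycles.

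Next I would use the standard short exact sequences of the cochain complex \eqref{cochain_complex_cells} with coefficients in $\slfrac{\Z}{k\Z}$:
\begin{equation*}
0 \to Z^{1}_{k} \to C^{1}_{k} \xrightarrow{\opd^{1}} B^{2}_{k} \to 0, \qquad 0 \to Z^{0}_{k} \to C^{0}_{k} \xrightarrow{\opd^{0}} B^{1}_{k} \to 0.
\end{equation*}
Since $B^{0}_{k} = 0$, one has $Z^{0}_{k} = H^{0}_{k}$, so the second sequence yields $\abs{B^{1}_{k}} = \abs{C^{0}_{k}}/\abs{H^{0}_{k}} = k^{\abs{C_{0}}}/\abs{H^{0}_{k}}$. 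Combined with $\abs{Z^{1}_{k}} = \abs{H^{1}_{k}}\cdot \abs{B^{1}_{k}}$, this gives
\begin{equation*}
\abs{Z^{1}_{k}} = \frac{k^{\abs{C_{0}}}\,\abs{H^{1}_{k}}}{\abs{H^{0}_{k}}}.
\end{equation*}

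Substituting back produces
\begin{equation*}
\mathcal{Z}^{1}_{\mathrm{TV}_{k}}\left(M^{\left(3\right)}\right) = \frac{1}{k^{\abs{C_{0}}-1}}\cdot \frac{k^{\abs{C_{0}}}\,\abs{H^{1}_{k}}}{\abs{H^{0}_{k}}} = k\,\frac{\abs{H^{1}_{k}}}{\abs{H^{0}_{k}}},
\end{equation*}
which is the announced formula. The computation is entirely elementary once the cocycle reinterpretation is in place; the only subtlety worth emphasising is that $H^{0}_{k}$ coincides with $Z^{0}_{k}$ (because there is no $\opd^{-1}$), and that divisibility of $\slfrac{\Z}{k\Z}$ is what allows us to identify $H^{\bullet}_{k}$ with $\Hom{H_{\bullet}\!\left(M^{\left(3\right)},\Z\right)}{\slfrac{\Z}{k\Z}}$ on the right-hand side, a rephrasing that will be used in the sequel of the paper. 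As a sanity check, the formula also makes manifest that $\mathcal{Z}^{1}_{\mathrm{TV}_{k}}\left(M^{\left(3\right)}\right)$ depends only on the cohomology of $M^{\left(3\right)}$, hence is independent of the chosen cellular decomposition.
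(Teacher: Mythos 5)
Your proof is correct and follows essentially the same route as the paper: reinterpret the product of Kronecker symbols modulo $k$ as the condition $\opd^{1}l^{\left(1\right)}\equiv 0$, count $\abs{Z^{1}_{k}}$, and use $\abs{Z^{1}_{k}}=\abs{H^{1}_{k}}\abs{B^{1}_{k}}$ together with $\abs{B^{1}_{k}}=\abs{C^{0}_{k}}/\abs{Z^{0}_{k}}$ and $Z^{0}_{k}=H^{0}_{k}$. Your explicit mention that $Z^{0}_{k}=H^{0}_{k}$ because $B^{0}_{k}=0$ is a small clarification the paper leaves implicit, but the argument is otherwise identical.
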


\begin{proof}
First, denote $Z^{p}_{k} = \noyau{\opd^{p}}$ the group of $p$-cocycles modulo $k$, $B^{p}_{k} = \image{\opd^{p-1}}$ the group of $p$-coboundaries modulo $k$. By definition, $H^{p}_{k} = \slfrac{Z^{p}_{k}}{B^{p}_{k}}$. Then, 
\begin{equation}
\abs{H^{1}_{k}} = \frac{\abs{Z^{1}_{k}}}{\abs{B^{1}_{k}}}
\end{equation}
and
\begin{equation}
\abs{B^{1}_{k}} = \frac{\abs{C^{0}_{k}}}{\abs{Z^{0}_{k}}}.
\end{equation}

Now, observe equation \eqref{Z_TV_ab_3D_H_k}. We realize that, apart from the normalization, we are actually counting the number $\abs{Z^{1}_{k}}$ of $1$-cocycles modulo $k$ for the cochain complex \eqref{cochain_complex_cells}. This is of course not an invariant of $M^{\left(3\right)}$, but
\begin{align}
\mathcal{Z}^{1}_{\mathrm{TV}_{k}}\left(M^{\left(3\right)}\right) 
=& \frac{1}{k^{\abs{C_{0}}-1}}\abs{Z^{1}_{k}}\\
=& \frac{1}{k^{\abs{C_{0}}-1}}\abs{H^{1}_{k}}\cdot\abs{B^{1}_{k}}\\
=& \frac{1}{k^{\abs{C_{0}}-1}}\abs{H^{1}_{k}}\cdot\frac{\abs{C^{0}_{k}}}{\abs{Z^{0}_{k}}}\\
=& \frac{1}{k^{\abs{C_{0}}-1}}\abs{H^{1}_{k}}\cdot\frac{k^{\abs{C_{0}}}}{\abs{H^{0}_{k}}}\\
\mathcal{Z}^{1}_{\mathrm{TV}_{k}}\left(M^{\left(3\right)}\right)
=& k\frac{\abs{H^{1}_{k}}}{\abs{H^{0}_{k}}},
\end{align}
whence the result.
\end{proof}

\begin{consequence}
If $M^{\left(3\right)}$ has only one connected component, as we assume here, \textit{i.e.} if $\abs{H^{0}_{k}} = k$, then
\begin{align}
\label{TV_Cohom}
\mathcal{Z}^{1}_{\mathrm{TV}_{k}}\left(M^{\left(3\right)}\right)
=& \abs{H^{1}_{k}}
\end{align}
\end{consequence}

\begin{remark}
Taking into account equation \eqref{TV_Cohom}, if $C_{2} = \varnothing$, it is reasonable to set 
\begin{equation}
\label{No_Cell}
\mathcal{Z}^{1}_{\mathrm{TV}_{k}}\left(M^{\left(3\right)}\right) = 1,
\end{equation}
as the absence of $2$-cells means there is no homology in degree $2$ and by Poincar{\'e} duality, there is no cohomology in degree $1$.
\end{remark}

\begin{remark}
We verify \textit{a posteriori} that what we are computing is indeed totally independent from the choice of the cellular decomposition.
\end{remark}

\begin{figure}[h]
\begin{center}
\includegraphics[scale=1.]{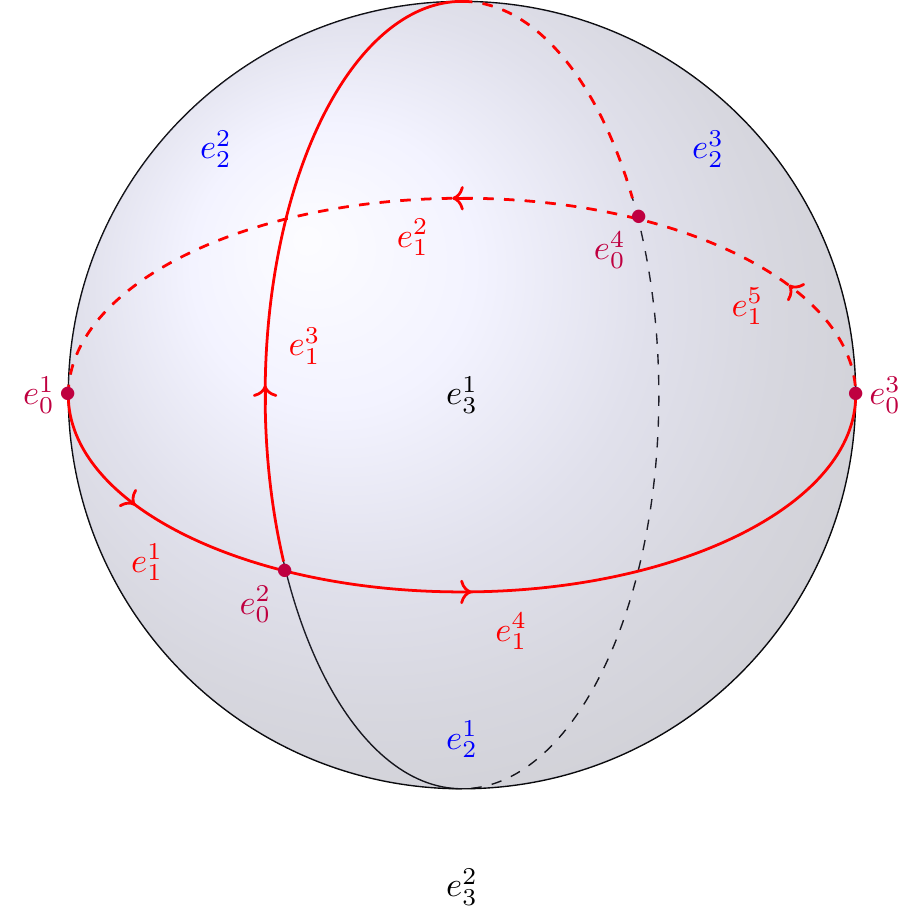}
\caption{A particular cellular decomposition of $S^{3}$}
\label{Decompo_S3}
\end{center}
\end{figure}

\begin{example}
Consider the decomposition of $S^{3}$ given by Figure \ref{Decompo_S3}. Then,
\begin{align*}
\mathcal{Z}^{1}_{\mathrm{TV}_{k}}\left(S^{3}\right)
&= \frac{1}{k^{\abs{C_{0}}-1}}
\sum\limits_{l^{\left(1\right)}\in C^{1}_{k}}\prod\limits_{a=1}^{3}
\opdelta^{\left[k\right]}_{\left(\opd^{1}l^{\left(1\right)}\right)\left(e^{a}_{\left(2\right)}\right)}\\
&= \frac{1}{k^{4-1}}
\sum\limits_{l^{\left(1\right)}\in C^{1}_{k}}
\opdelta^{\left[k\right]}_{l^{\left(1\right)}\left(\oppartial_{2}e^{1}_{\left(2\right)}\right)}
\opdelta^{\left[k\right]}_{l^{\left(1\right)}\left(\oppartial_{2}e^{2}_{\left(2\right)}\right)}
\opdelta^{\left[k\right]}_{l^{\left(1\right)}\left(\oppartial_{2}e^{3}_{\left(2\right)}\right)}\\
&= \frac{1}{k^{3}}
\sum\limits_{l^{\left(1\right)}\in C^{1}_{k}}
\opdelta^{\left[k\right]}_{l^{\left(1\right)}\left(e^{1}_{\left(1\right)}+e^{2}_{\left(1\right)}
+e^{4}_{\left(1\right)}+e^{5}_{\left(1\right)}\right)}
\opdelta^{\left[k\right]}_{l^{\left(1\right)}\left(e^{1}_{\left(1\right)}+e^{2}_{\left(1\right)}
+e^{3}_{\left(1\right)}\right)}
\opdelta^{\left[k\right]}_{l^{\left(1\right)}\left(e^{3}_{\left(1\right)}-e^{4}_{\left(1\right)}
-e^{5}_{\left(1\right)}\right)}\\
&= \frac{1}{k^{3}}
\sum\limits_{l^{\left(1\right)}\in C^{1}_{k}}
\opdelta^{\left[k\right]}_{l^{\left(1\right)}_{k}\left(e^{1}_{\left(1\right)}\right)
+l^{\left(1\right)}\left(e^{2}_{\left(1\right)}\right)
+l^{\left(1\right)}\left(e^{4}_{\left(1\right)}\right)
+l^{\left(1\right)}\left(e^{5}_{\left(1\right)}\right)}\\
&\hspace{2.cm}\opdelta^{\left[k\right]}_{l^{\left(1\right)}\left(e^{1}_{\left(1\right)}\right)
+l^{\left(1\right)}\left(e^{2}_{\left(1\right)}\right)
+l^{\left(1\right)}\left(e^{3}_{\left(1\right)}\right)}
\opdelta^{\left[k\right]}_{l^{\left(1\right)}\left(e^{3}_{\left(1\right)}\right)
-l^{\left(1\right)}\left(e^{4}_{\left(1\right)}\right)
-l^{\left(1\right)}\left(e^{5}_{\left(1\right)}\right)}\\
&= \frac{1}{k^{3}}
\sum\limits_{i,j,l,m,n=0}^{k-1}
\opdelta^{\left[k\right]}_{i+j+m+n}
\opdelta^{\left[k\right]}_{i+j+l}
\opdelta^{\left[k\right]}_{l-m-n}\\
&= \frac{1}{k^{3}}
\sum\limits_{i,j,m,n=0}^{k-1}
\opdelta^{\left[k\right]}_{i+j+m+n}\\
&= \frac{1}{k^{3}}
\sum\limits_{i,j,m=0}^{k-1}1\\
&= \frac{1}{k^{3}}k^{3}\\
\mathcal{Z}^{1}_{\mathrm{TV}_{k}}\left(S^{3}\right)
&= 1
\end{align*} 

Note that this result is trivial from \eqref{No_Cell} if we consider $S^{3} = B^{3}\underset{\oppartial B^{3}}{\cup}\mathrm{pt}$, which is a decomposition that has only one $0$-cell and one $3$-cell, and no $1$-cell nor $2$-cell.
\end{example}

\begin{remark}
If we had followed the standard convention of normalization, \textit{i.e.} if we had defined $\mathcal{Z}^{1}_{\mathrm{TV}_{k}}\left(M^{\left(3\right)}\right)$ as
\begin{equation}
\mathcal{Z}^{1}_{\mathrm{TV}_{k}}\left(M^{\left(3\right)}\right)
= \frac{1}{k^{\abs{C_{0}}}}\sum\limits_{l^{\left(1\right)}\in C^{1}_{k}}
\left(\prod\limits_{e_{\left(3\right)}\in C_{3}}\opdelta^{\left[k\right]}_{l^{\left(1\right)}\left(\oppartial_{2}e_{\left(2\right)}\right)}\right)\in\mathbb{C}
\end{equation}
then we would have had
\begin{equation}
\mathcal{Z}^{1}_{\mathrm{TV}_{k}}\left(S^{3}\right) = \frac{1}{k}
\end{equation}
and 
\begin{equation}
\mathcal{Z}^{1}_{\mathrm{TV}_{k}}\left(S^{1}\times S^{2}\right) = 1.
\end{equation}
This definition is the appropriate one in order for $\mathcal{Z}^{1}_{\mathrm{TV}_{k}}\left(M^{\left(3\right)}\right)$ to define a TQFT satisfying Atiyah-Segal axioms \cite{ATI1988,SEG1988,SEG2001}. In this framework, the invariant is also defined for compact $3$-manifolds with boundary, which is a case we are not considering here. We focus here on \textit{closed} $3$-manifolds, and we shifted the number of vertices by $1$ in the definition of $\mathcal{Z}^{1}_{\mathrm{TV}_{k}}\left(M^{\left(3\right)}\right)$ in order to have
\begin{equation}
\mathcal{Z}^{1}_{\mathrm{TV}_{k}}\left(S^{3}\right) = 1
\end{equation}
for later convenience. 
\end{remark}

\begin{remark}
This TV construction (in 3D with a triangulation) may remind the reader of the Dijkgraaf-Witten (DW) construction \cite{DW1990}. According to \cite{TV2017}, for a pair $\left(G,\alpha\right)$ where $G$ is a finite group (not necessarily Abelian in general) and  $\alpha$ is a $3$-cocycle of the classifying space $BG$ of $G$, then there exists a spherical fusion category $\mathcal{C}$ such that the TV construction for $\mathcal{C}$ matches with the DW construction for the pair $\left(G,\alpha\right)$:
\begin{equation}
\mathcal{Z}_{\mathrm{TV}_{\mathcal{C}}}\left(M^{\left(3\right)}\right)
 = \mathcal{Z}_{\mathrm{DW}_{\left(G,\alpha\right)}}\left(M^{\left(3\right)}\right)
\end{equation}
The spherical fusion category $\mathcal{C}$ is explicitly built from $G$, $\alpha$ playing the role of associator.

The question here is somewhat the opposite: If we properly follow the categorical construction that leads to \eqref{Def_TV_ab_3D}, are we able to find a pair $\left(G,\alpha\right)$ such that 
\begin{equation}
\mathcal{Z}^{1}_{\mathrm{TV}_{k}}\left(M^{\left(3\right)}\right)
 = \mathcal{Z}_{\mathrm{DW}_{\left(G,\alpha\right)}}\left(M^{\left(3\right)}\right)?
\end{equation}
The answer is positive, with $G = \slfrac{\Z}{k\Z}$ and $\alpha$ trivial, in the sense that the associativity of the underlying category is not twisted. 

Remark that the Kronecker symbols modulo $k$ of formula \eqref{Def_TV_ab_3D} do not appear explicitly in the DW construction. However, this construction consists of summing only over the labels satisfying the condition that, if three labels are associated with the edges of a triangle of the triangulation of $M$, then their sum must be zero (in additive notations.) If, instead, we want to sum over all the labels, then for $G = \slfrac{\Z}{k\Z}$, this condition translates into Kronecker symbols modulo $k$ associated with each triangle of the triangulation of $M$.
\end{remark}

\subsection{The nD $\slfrac{\Z}{k\Z}$ case}

In this subsection, $M^{\left(n\right)}$ is a closed connected oriented smooth $n$-manifold.

Clearly, formula \eqref{Z_TV_ab_3D} can be generalized to manifolds of any dimension $n$, and we can decide to label the cells of any dimension $p$. 
\begin{definition}
For all $p\in\left\lbrace 0,\hdots,\left(n-1\right)\right\rbrace$, if $C_{p+1}\neq\varnothing$ then
\begin{equation}
\label{Def_TV_ab_nD}
\mathcal{Z}^{p}_{\mathrm{TV}_{k}}\left(M^{\left(n\right)}\right)
:= \frac{1}{k^{\abs{C_{p-1}} - \abs{C_{p-2}} + \hdots - \varepsilon_{p}\abs{C_{0}}+\varepsilon_{p}}}
\sum\limits_{l^{\left(p\right)}_{k}\in C^{p}_{k}}\prod\limits_{a=1}^{\abs{C_{p+1}}}
\opdelta^{\left[k\right]}_{\left(\opd^{p}l^{\left(p\right)}_{k}\right)\left(e^{a}_{p+1}\right)},
\end{equation}
where $\varepsilon_{p} = \left(-1\right)^{p}$, and
\begin{equation}
\mathcal{Z}^{p}_{\mathrm{TV}_{k}}\left(M^{\left(n\right)}\right) = 1
\end{equation}
otherwise.
\end{definition}

\begin{remark}
The normalization in formula \eqref{Def_TV_ab_nD} was chosen in such a way that the quantity $\mathcal{Z}^{p}_{\mathrm{TV}_{k}}\left(M^{\left(n\right)}\right)$ is truly invariant and satisfies $\mathcal{Z}^{p}_{\mathrm{TV}_{k}}\left(S^{n}\right) = 1$ for all $p\in\left\lbrace 0,\hdots,\left(n-1\right)\right\rbrace$.
\end{remark}

\begin{proposition}
We can write that
\begin{align*}
&\mathcal{Z}^{p}_{\mathrm{TV}_{k}}\left(M^{\left(n\right)}\right)=\\
&\begin{cases}
\abs{\Hom{H_{p}}{\slfrac{\Z}{k\Z}}}\dfrac{\abs{\Hom{F_{p-2}}{\slfrac{\Z}{k\Z}}}\hdots\abs{\Hom{F_{0}}{\slfrac{\Z}{k\Z}}}}{\abs{\Hom{F_{p-1}}{\slfrac{\Z}{k\Z}}}\hdots\abs{\Hom{F_{1}}{\slfrac{\Z}{k\Z}}}}\dfrac{1}{\abs{\slfrac{\Z}{k\Z}}}
\,\mbox{if $p$ is even}\\
\abs{\Hom{H_{p}}{\slfrac{\Z}{k\Z}}}\dfrac{\abs{\Hom{F_{p-2}}{\slfrac{\Z}{k\Z}}}\hdots\abs{\Hom{F_{1}}{\slfrac{\Z}{k\Z}}}}{\abs{\Hom{F_{p-1}}{\slfrac{\Z}{k\Z}}}\hdots\abs{\Hom{F_{0}}{\slfrac{\Z}{k\Z}}}}\abs{\slfrac{\Z}{k\Z}}
\,\mbox{if $p$ is odd}
\end{cases}
\end{align*}
or
\begin{align*}
&\mathcal{Z}^{p}_{\mathrm{TV}_{k}}\left(M^{\left(n\right)}\right)=\\
&\begin{cases}
\abs{\Hom{T_{p}}{\slfrac{\Z}{k\Z}}}\dfrac{\abs{\Hom{F_{p}}{\slfrac{\Z}{k\Z}}}\hdots\abs{\Hom{F_{0}}{\slfrac{\Z}{k\Z}}}}{\abs{\Hom{F_{p-1}}{\slfrac{\Z}{k\Z}}}\hdots\abs{\Hom{F_{1}}{\slfrac{\Z}{k\Z}}}}\dfrac{1}{\abs{\slfrac{\Z}{k\Z}}}
\,\mbox{if $p$ is even}\\
\abs{\Hom{T_{p}}{\slfrac{\Z}{k\Z}}}\dfrac{\abs{\Hom{F_{p}}{\slfrac{\Z}{k\Z}}}\hdots\abs{\Hom{F_{1}}{\slfrac{\Z}{k\Z}}}}{\abs{\Hom{F_{p-1}}{\slfrac{\Z}{k\Z}}}\hdots\abs{\Hom{F_{0}}{\slfrac{\Z}{k\Z}}}}\abs{\slfrac{\Z}{k\Z}}
\,\mbox{if $p$ is odd}
\end{cases}
\end{align*}
\end{proposition}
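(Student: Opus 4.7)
The plan is to iterate the argument of Claim \ref{Z_TV_ab_3D_H_k} through every degree up to $p$. As a first step, I would observe that each Kronecker factor $\opdelta^{\left[k\right]}_{(\opd^{p}l^{\left(p\right)}_{k})(e^{a}_{p+1})}$ in \eqref{Def_TV_ab_nD} forces $\opd^{p}l^{\left(p\right)}_{k}$ to vanish on every $(p+1)$-cell, so the labelling sum collapses to a count of $p$-cocycles modulo $k$. Writing $Z^{p}_{k} = \noyau{\opd^{p}}$ and $B^{p}_{k} = \image{\opd^{p-1}}$, this gives $\mathcal{Z}^{p}_{\mathrm{TV}_{k}}\!\left(M^{\left(n\right)}\right) = k^{-N}\abs{Z^{p}_{k}}$, where $N$ is the normalization exponent appearing in \eqref{Def_TV_ab_nD}.

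Second, I would compute $\abs{Z^{p}_{k}}$ by unrolling the two short exact sequences $0 \to B^{q}_{k}\to Z^{q}_{k}\to H^{q}_{k}\to 0$ and $0 \to Z^{q}_{k}\to C^{q}_{k}\to B^{q+1}_{k}\to 0$ that hold at every degree $q$. They yield $\abs{Z^{q}_{k}} = \abs{H^{q}_{k}}\cdot\abs{B^{q}_{k}}$ and $\abs{B^{q+1}_{k}} = k^{\abs{C_{q}}}/\abs{Z^{q}_{k}}$. Iterating from degree $p$ down to $0$, with the base case $Z^{0}_{k} = H^{0}_{k}$ (since $B^{0}_{k} = 0$), produces an alternating product
\begin{equation*}
\abs{Z^{p}_{k}} = k^{\sum_{q=0}^{p-1}(-1)^{p-1-q}\abs{C_{q}}}\prod_{q=0}^{p}\abs{H^{q}_{k}}^{(-1)^{p-q}}.
\end{equation*}
A direct parity check shows that the $k$-exponent here is exactly $N-\varepsilon_{p}$, which is precisely the reason for the normalization chosen in \eqref{Def_TV_ab_nD}, so after division
\begin{equation*}
\mathcal{Z}^{p}_{\mathrm{TV}_{k}}\!\left(M^{\left(n\right)}\right) = k^{-\varepsilon_{p}}\prod_{q=0}^{p}\abs{H^{q}_{k}}^{(-1)^{p-q}}.
\end{equation*}

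Third, I would translate the cohomological counts $\abs{H^{q}_{k}}$ into homological data via the Universal Coefficient Theorem, $\abs{H^{q}_{k}} = \abs{\Hom{H_{q}}{\slfrac{\Z}{k\Z}}}\cdot\abs{\Ext{H_{q-1}}{\slfrac{\Z}{k\Z}}}$. Because $\Ext{F_{q-1}}{\slfrac{\Z}{k\Z}} = 0$ and $\abs{\Ext{T_{q-1}}{\slfrac{\Z}{k\Z}}} = \abs{\Hom{T_{q-1}}{\slfrac{\Z}{k\Z}}}$ (both being $\prod_{i}\slfrac{\Z}{\mathrm{gcd}(\zeta^{q}_{i},k)\Z}$), splitting everything along $H_{q}=F_{q}\oplus T_{q}$ yields a very clean cancellation: for each $q<p$ the torsion factor $\abs{\Hom{T_{q}}{\slfrac{\Z}{k\Z}}}$ shows up twice in the alternating product with opposite signs and cancels out. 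Only $\abs{\Hom{T_{p}}{\slfrac{\Z}{k\Z}}}$ survives, and combined with $\abs{\Hom{F_{p}}{\slfrac{\Z}{k\Z}}}$ it reconstructs $\abs{\Hom{H_{p}}{\slfrac{\Z}{k\Z}}}$ in the numerator. The remaining alternating product of free contributions $\abs{\Hom{F_{q}}{\slfrac{\Z}{k\Z}}}$ for $q\in\{0,\ldots,p-1\}$ arranges itself into numerator versus denominator according to the sign $(-1)^{p-q}$, which gives the first stated formula with the two parities of $p$ distinguished. The second form is then obtained simply by reabsorbing $\abs{\Hom{F_{p}}{\slfrac{\Z}{k\Z}}}$ into the free-part fraction. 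The hardest step will be this last piece of bookkeeping: tracking the signs through the recursion and then through UCT while cleanly handling the two parities of $p$, and justifying the use of UCT in this form even though $\slfrac{\Z}{k\Z}$ is not strictly divisible.
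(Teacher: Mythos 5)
Your proposal is correct and follows essentially the same route as the paper: collapsing the Kronecker deltas to a count of $p$-cocycles modulo $k$, iterating $\abs{Z^{q}_{k}} = \abs{H^{q}_{k}}\abs{B^{q}_{k}}$ together with $\abs{B^{q+1}_{k}} = k^{\abs{C_{q}}}/\abs{Z^{q}_{k}}$ down to degree $0$, and then converting each $\abs{H^{q}_{k}}$ via the Universal Coefficient Theorem so that the torsion factors telescope, leaving $\abs{\Hom{T_{p}}{\slfrac{\Z}{k\Z}}}$ and the alternating free-part fraction. Your closing worry about divisibility is harmless, since (like the paper's computation) you use the Ext-form of the UCT with $\abs{\Ext{T_{q-1}}{\slfrac{\Z}{k\Z}}} = \abs{\Hom{T_{q-1}}{\slfrac{\Z}{k\Z}}}$, which requires no divisibility assumption.
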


\begin{proof}
The point of this computation is to iterate the process of the proof of Claim \ref{Z_TV_ab_3D_H_k}.

Indeed, from definition \eqref{Def_TV_ab_nD}, we understand that we are counting the number $\abs{Z^{p}_{k}}$ of $p$-cocycles modulo $k$. But
\begin{equation}
\abs{Z^{p}_{k}} = \abs{H^{p}_{k}}\abs{B^{p}_{k}} = \frac{\abs{C^{p-1}_{k}}}{\abs{Z^{p-1}_{N}}}
\end{equation}
whence
\begin{equation}
\label{Zpk_non_simplif}
\abs{Z^{p}_{k}} = 
\begin{cases}
	k^{\abs{C_{p-1}} - \abs{C_{p-2}} + \hdots - \abs{C_{0}}} 
	\frac{\abs{H^{p}_{k}}\abs{H^{p-2}_{k}}\hdots\abs{H^{0}_{k}}}
	{\abs{H^{p-1}_{k}}\abs{H^{p-3}_{k}}\hdots\abs{H^{1}_{k}}}\,\mbox{if $p$ is even,}\\
	\\
	k^{\abs{C_{p-1}} - \abs{C_{p-2}} + \hdots + \abs{C_{0}}} 
	\frac{\abs{H^{p}_{k}}\abs{H^{p-2}_{k}}\hdots\abs{H^{1}_{k}}}
	{\abs{H^{p-1}_{k}}\abs{H^{p-3}_{k}}\hdots\abs{H^{0}_{k}}}\,\mbox{if $p$ is odd.}	
\end{cases}
\end{equation}

But from the Universal Coefficient Theorem \cite{BT1982}, we can write
\begin{equation}
H^{\bullet}_{k}\cong\Hom{H_{\bullet}}{\slfrac{\Z}{k\Z}}\oplus\Ext{T_{\bullet-1}}{\slfrac{\Z}{k\Z}},
\end{equation}
and moreover
\begin{equation}
\Ext{\slfrac{\Z}{\zeta\Z}}{\slfrac{\Z}{k\Z}}
\cong\slfrac{\Z}{\gcd\left(\zeta,k\right)\Z} 
\cong\Hom{\slfrac{\Z}{\zeta\Z}}{\slfrac{\Z}{k\Z}}.
\end{equation}
Hence,
\begin{equation}
H^{\bullet}_{k}
\cong\Hom{F_{\bullet}}{\slfrac{\Z}{k\Z}}
\oplus\Hom{T_{\bullet}}{\slfrac{\Z}{k\Z}}
\oplus\Hom{T_{\bullet-1}}{\slfrac{\Z}{k\Z}},
\end{equation}
from which we obtain that
\begin{equation}
\abs{H^{\bullet}_{k}}
=\abs{\Hom{F_{\bullet}}{\slfrac{\Z}{k\Z}}}
\abs{\Hom{T_{\bullet}}{\slfrac{\Z}{k\Z}}}
\abs{\Hom{T_{\bullet-1}}{\slfrac{\Z}{k\Z}}}
\end{equation}
which, introduced in equation \eqref{Zpk_non_simplif}, gives by telescoping simplifications the result expected.
\end{proof}

\begin{example}
Still with the decomposition on Figure \ref{Decompo_S3} of $S^{3}$, we can compute
\begin{align*}
\mathcal{Z}^{0}_{\mathrm{TV}_{k}}\left(S^{3}\right)
&= \frac{1}{k}
\sum\limits_{l^{0}_{k}\in C^{0}_{k}}\prod\limits_{a=1}^{5}
\opdelta^{\left[k\right]}_{\left(\opd^{0}l^{0}_{k}\right)\left(e^{a}_{1}\right)}\\
&= \frac{1}{k}
\sum\limits_{l^{0}_{k}\in C^{0}_{k}}
\opdelta^{\left[k\right]}_{l^{0}_{k}\left(\oppartial_{1}e^{1}_{1}\right)}
\opdelta^{\left[k\right]}_{l^{0}_{k}\left(\oppartial_{1}e^{2}_{1}\right)}\\
&\qquad\opdelta^{\left[k\right]}_{l^{0}_{k}\left(\oppartial_{1}e^{3}_{1}\right)}
\opdelta^{\left[k\right]}_{l^{0}_{k}\left(\oppartial_{1}e^{4}_{1}\right)}
\opdelta^{\left[k\right]}_{l^{0}_{k}\left(\oppartial_{1}e^{5}_{1}\right)}\\
&= \frac{1}{k}
\sum\limits_{l^{0}_{k}\in C^{0}_{k}}
\opdelta^{\left[k\right]}_{l^{0}_{k}\left(e^{2}_{0}-e^{1}_{0}\right)}
\opdelta^{\left[k\right]}_{l^{0}_{k}\left(e^{1}_{0}-e^{4}_{0}\right)}\\
&\qquad\opdelta^{\left[k\right]}_{l^{0}_{k}\left(e^{4}_{0}-e^{2}_{0}\right)}
\opdelta^{\left[k\right]}_{l^{0}_{k}\left(e^{3}_{0}-e^{2}_{0}\right)}
\opdelta^{\left[k\right]}_{l^{0}_{k}\left(e^{4}_{0}-e^{3}_{0}\right)}\\
&= \frac{1}{k}
\sum\limits_{i,j,l,m=0}^{k-1}
\opdelta^{\left[k\right]}_{j-i}
\opdelta^{\left[k\right]}_{i-m}
\opdelta^{\left[k\right]}_{m-j}
\opdelta^{\left[k\right]}_{l-j}
\opdelta^{\left[k\right]}_{m-l}\\
\mathcal{Z}^{0}_{\mathrm{TV}_{k}}\left(S^{3}\right)
&= 1
\end{align*} 
and
\begin{align*}
\mathcal{Z}^{2}_{\mathrm{TV}_{k}}\left(S^{3}\right)
&= \frac{1}{k^{\abs{C_{1}}-\abs{C_{0}}+\varepsilon_{1}}}
\sum\limits_{l^{2}_{k}\in C^{2}_{k}}\prod\limits_{a=1}^{2}
\opdelta^{\left[k\right]}_{\left(\opd^{2}l^{2}_{k}\right)\left(e^{a}_{3}\right)}\\
&= \frac{1}{k^{5-4+1}}
\sum\limits_{l^{2}_{k}\in C^{2}_{k}}
\opdelta^{\left[k\right]}_{l^{2}_{k}\left(\oppartial_{2}e^{1}_{3}\right)}
\opdelta^{\left[k\right]}_{l^{2}_{k}\left(\oppartial_{2}e^{2}_{3}\right)}\\
&= \frac{1}{k^{2}}
\sum\limits_{l^{2}_{k}\in C^{2}_{k}}
\opdelta^{\left[k\right]}_{l^{2}_{k}\left(e^{1}_{2}+e^{2}_{2}+e^{3}_{2}\right)}
\opdelta^{\left[k\right]}_{l^{2}_{k}\left(e^{1}_{2}+e^{2}_{2}+e^{3}_{2}\right)}\\
&= \frac{1}{k^{2}}
\sum\limits_{i,j,l=0}^{k-1}
\opdelta^{\left[k\right]}_{i+j+l}\\
&= \frac{1}{k^{2}}
\sum\limits_{i,j=0}^{k-1}1\\
&= \frac{1}{k^{2}}k^{2}\\
\mathcal{Z}^{2}_{\mathrm{TV}_{k}}\left(S^{3}\right)
&= 1
\end{align*}
\end{example}

\begin{example}
Consider the following cellular decomposition of $S^{n}$: For all $p\in\left\lbrace 0, \hdots, n\right\rbrace,\,e^{1}_{\left(p\right)}, e^{2}_{\left(p\right)} = B^{p}$, and for all $p\in\left\lbrace 1, \hdots, n\right\rbrace$, $\oppartial_{p}e^{1}_{\left(p\right)} = -\oppartial_{p} e^{2}_{\left(p\right)} = S^{p-1} = e^{2}_{\left(p-1\right)} + e^{1}_{\left(p-1\right)}$. Then for all $p\in\left\lbrace 0, \hdots, \left(n-1\right)\right\rbrace$,
\begin{align*}
\mathcal{Z}^{p}_{\mathrm{TV}_{k}}\left(S^{n}\right)
&= \frac{1}{k^{2 - 2 + \hdots - 2\varepsilon_{p}+\varepsilon_{p}}}
\sum\limits_{l^{\left(p\right)}\in C^{p}_{k}}\prod\limits_{a=1}^{2}
\opdelta^{\left[k\right]}_{\left(\opd^{p}l^{\left(p\right)}\right)\left(e^{a}_{\left(p+1\right)}\right)}\\
&= \frac{1}{k}
\sum\limits_{l^{\left(p\right)}\in C^{p}_{k}}
\opdelta^{\left[k\right]}_{l^{\left(p\right)}\left(\oppartial_{p+1}e^{1}_{\left(p+1\right)}\right)}
\opdelta^{\left[k\right]}_{l^{\left(p\right)}\left(\oppartial_{p+1}e^{2}_{\left(p+1\right)}\right)}\\
&= \frac{1}{k}
\sum\limits_{l^{\left(p\right)}\in C^{p}_{k}}
\opdelta^{\left[k\right]}_{l^{\left(p\right)}\left(e^{1}_{\left(p\right)} + e^{2}_{\left(p\right)}\right)}
\opdelta^{\left[k\right]}_{l^{\left(p\right)}\left(- e^{1}_{\left(p\right)} - e^{2}_{\left(p\right)}\right)}\\
&= \frac{1}{k}
\sum\limits_{l^{\left(p\right)}\in C^{p}_{k}}
\opdelta^{\left[k\right]}_{l^{\left(p\right)}\left(e^{1}_{\left(p\right)}\right) 
+ l^{\left(p\right)}\left(e^{2}_{\left(p\right)}\right)}
\opdelta^{\left[k\right]}_{-l^{\left(p\right)}\left(e^{1}_{\left(p\right)}\right) 
- l^{\left(p\right)}\left(e^{2}_{\left(p\right)}\right)}\\
&= \frac{1}{k}
\sum\limits_{i,j=0}^{k-1}
\opdelta^{\left[k\right]}_{i + j}
\opdelta^{\left[k\right]}_{-i - j}\\
&= \frac{1}{k}
\sum\limits_{i,j=0}^{k-1}
\opdelta^{\left[k\right]}_{i + j}\\
&= \frac{1}{k}
\sum\limits_{i=0}^{k-1}1\\
\mathcal{Z}^{p}_{\mathrm{TV}_{k}}\left(S^{n}\right)
&= 1
\end{align*}
\end{example}


\section{Generalized Turaev-Viro invariant as a discrete BF theory}

\begin{definition}
Given a cellular decomposition
\begin{equation}
C_{n}\overset{\oppartial_{n}}{\longrightarrow}
C_{n-1}\overset{\oppartial_{n-1}}{\longrightarrow}
\hdots\overset{\oppartial_{2}}{\longrightarrow}
C_{1}\overset{\oppartial_{1}}{\longrightarrow}
C_{0}
\end{equation}
of $M^{\left(n\right)}$, we call dual decomposition any cellular decomposition
\begin{equation}
\widetilde{C}_{n}\overset{\oppartial_{n}}{\longrightarrow}
\widetilde{C}_{n-1}\overset{\oppartial_{n-1}}{\longrightarrow}
\hdots\overset{\oppartial_{2}}{\longrightarrow}
\widetilde{C}_{1}\overset{\oppartial_{1}}{\longrightarrow}
\widetilde{C}_{0}
\end{equation}
of $M^{\left(n\right)}$ such that for any $\widetilde{e}^{a}_{\left(p\right)}\in\widetilde{C}_{p}$ and $e^{b}_{\left(n-p\right)}\in C_{n-p}$, 
\begin{equation}
\widetilde{e}^{a}_{\left(p\right)}\cdot e^{b}_{\left(n-p\right)} = \delta^{ab}
\end{equation}
where $\cdot$ is the number of intersections operator. We define also dual labelings from the dual chain complex, exactly the same way we defined the non-dual labelings, \textit{i.e.} $\widetilde{C}^{p}_{k} = \left\lbrace \tilde{l}^{\left(p\right)}:\widetilde{C}_{p}\longrightarrow I_{k}\right\rbrace$.
\end{definition}

\begin{claim}
We can write
\begin{align*}
\opdelta^{\left[k\right]}_{\opd^{p}l^{\left(p\right)}}
= \frac{1}{k^{\abs{\widetilde{C}_{q}}}}
\sum\limits_{\widetilde{m}^{\left(q\right)}\in\widetilde{C}^{q}_{k}}
e^{\frac{2\pi i}{k}\widetilde{m}^{\left(q\right)}\cdot\opd^{p} l^{\left(p\right)}}
\end{align*}
where $n = p + q + 1$, and where $l^{\left(p\right)}$ is regarded as a vector with $\abs{C_{p}}$ components, $\opd^{p}$ as a matrix of size $\abs{C_{p}}\times\abs{C_{p+1}}$ and $\widetilde{m}^{\left(q\right)}$ as a vector of size $\abs{\widetilde{C}_{q}} = \abs{C_{n-q}}=\abs{C_{p+1}}$ by construction.
\end{claim}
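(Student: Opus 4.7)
The plan is to view the claimed identity as the multidimensional discrete Fourier transform / Pontryagin duality identity on the finite abelian group $\widetilde{C}^{q}_{k} \cong \left(\slfrac{\Z}{k\Z}\right)^{\abs{\widetilde{C}_{q}}}$, applied component by component over the cells of $C_{p+1}$. The starting observation is that, under the bijection $\widetilde{C}_{q} \leftrightarrow C_{p+1}$ provided by the duality $\widetilde{e}^{a}_{(q)}\cdot e^{b}_{(p+1)} = \delta^{ab}$, a dual cochain $\widetilde{m}^{(q)}$ can be identified with a vector $\left(\widetilde{m}_{a}\right)_{a=1}^{\abs{C_{p+1}}}$ of elements of $I_{k}$, and the pairing $\widetilde{m}^{(q)}\cdot \opd^{p}l^{(p)}$ equals
\begin{equation}
\sum_{a=1}^{\abs{C_{p+1}}} \widetilde{m}_{a}\,\bigl(\opd^{p}l^{(p)}\bigr)\!\left(e^{a}_{(p+1)}\right).
\end{equation}

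Next I would invoke the elementary one-dimensional identity
\begin{equation}
\opdelta^{\left[k\right]}_{x} = \frac{1}{k}\sum_{m=0}^{k-1} e^{\frac{2\pi i}{k} m x}, \qquad x\in\Z,
\end{equation}
which is just the orthogonality of characters of $\slfrac{\Z}{k\Z}$. Here the left-hand side $\opdelta^{\left[k\right]}_{\opd^{p}l^{(p)}}$ should be read as the product over all $(p+1)$-cells:
\begin{equation}
\opdelta^{\left[k\right]}_{\opd^{p}l^{(p)}} = \prod_{a=1}^{\abs{C_{p+1}}} \opdelta^{\left[k\right]}_{\left(\opd^{p}l^{(p)}\right)\!\left(e^{a}_{(p+1)}\right)},
\end{equation}
since $\opd^{p}l^{(p)}\in C^{p+1}_{k}$ is zero exactly when all its components vanish mod $k$.

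Applying the one-dimensional identity to each factor and exchanging the product and the sums yields
\begin{equation}
\opdelta^{\left[k\right]}_{\opd^{p}l^{(p)}} = \prod_{a=1}^{\abs{C_{p+1}}}\frac{1}{k}\sum_{\widetilde{m}_{a}=0}^{k-1} e^{\frac{2\pi i}{k} \widetilde{m}_{a}\left(\opd^{p}l^{(p)}\right)\!\left(e^{a}_{(p+1)}\right)} = \frac{1}{k^{\abs{C_{p+1}}}}\sum_{\widetilde{m}\in I_{k}^{\abs{C_{p+1}}}} e^{\frac{2\pi i}{k}\sum_{a}\widetilde{m}_{a}\left(\opd^{p}l^{(p)}\right)\!\left(e^{a}_{(p+1)}\right)}.
\end{equation}
Using $\abs{\widetilde{C}_{q}}=\abs{C_{p+1}}$ (which follows from the dual decomposition together with $n=p+q+1$) and re-reading the vector $(\widetilde{m}_{a})_{a}$ as a dual cochain $\widetilde{m}^{(q)}\in\widetilde{C}^{q}_{k}$, the exponent becomes $\widetilde{m}^{(q)}\cdot\opd^{p}l^{(p)}$ and one recovers exactly the stated formula.

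The only genuinely delicate point is the first one, namely justifying that the identification $\widetilde{C}^{q}_{k}\cong (\slfrac{\Z}{k\Z})^{\abs{C_{p+1}}}$ induced by the dual basis intertwines the intersection pairing $\widetilde{m}^{(q)}\cdot \opd^{p}l^{(p)}$ with the naive componentwise dot product in $\slfrac{\Z}{k\Z}^{\abs{C_{p+1}}}$; once that is unpacked, the rest is the standard Fourier inversion identity on a finite abelian group and involves no further analytical content.
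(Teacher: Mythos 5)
Your proof is correct and is exactly the argument the paper relies on (the claim is stated there without an explicit proof): the componentwise character-orthogonality identity $\opdelta^{\left[k\right]}_{x}=\frac{1}{k}\sum_{m=0}^{k-1}e^{\frac{2\pi i}{k}mx}$ applied over the $\left(p+1\right)$-cells, with the dual decomposition supplying the bijection $\widetilde{C}_{q}\leftrightarrow C_{p+1}$ that turns the collection of coefficients $\widetilde{m}_{a}$ into a dual cochain $\widetilde{m}^{\left(q\right)}\in\widetilde{C}^{q}_{k}$ and the exponent into $\widetilde{m}^{\left(q\right)}\cdot\opd^{p}l^{\left(p\right)}$. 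The point you flag as delicate is in fact built into the paper's definition of the dual decomposition (the intersection pairing $\widetilde{e}^{a}_{\left(q\right)}\cdot e^{b}_{\left(p+1\right)}=\delta^{ab}$ is precisely what declares the dot product to be componentwise), so no further justification is needed.
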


\begin{consequence} 
We can write
\begin{align}
\nonumber
&\mathcal{Z}^{p}_{\mathrm{TV}_{k}}\left(M^{\left(n\right)}\right)\\
&\hspace{0.5cm}= \frac{1}{k^{\abs{C_{p+1}}+\abs{C_{p-1}}-\abs{C_{p-2}}+\hdots\pm\varepsilon_{p}\abs{C_{0}}\mp\varepsilon_{p}}}
\sum\limits_{l^{\left(p\right)}_{k}\in C^{p}_{k}}\sum\limits_{\widetilde{m}^{\left(q\right)}_{k}\in\widetilde{C}^{q}_{k}}
e^{\frac{2\pi i}{k}\widetilde{m}^{\left(q\right)}\cdot\opd^{p} l^{\left(p\right)}}
\end{align}
which can be interpreted as a sort of partition function of a discrete BF theory.
\end{consequence}

\section{Conclusion}

\begin{proposition}[Ph. M., E. H{\o}ssjer, F. Thuillier]
\begin{align}
\mathcal{Z}^{p}_{\mathrm{BF}_{k}}\left(M\right)
= \frac{\abs{T_{p}}}{\abs{\Hom{F_{p}}{\slfrac{\Z}{k\Z}}}}
\frac{\abs{\Hom{F_{p-1}}{\slfrac{\Z}{k\Z}}}\hdots}{\abs{\Hom{F_{p-2}}{\slfrac{\Z}{k\Z}}}\hdots}
\mathcal{Z}^{p}_{\mathrm{TV}_{k}}\left(M\right)
\end{align}
\end{proposition}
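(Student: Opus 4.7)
The plan is to proceed by direct algebraic comparison of the two explicit formulas already established in the preceding sections, rather than by any new structural argument. First, I would substitute
\begin{equation*}
\mathcal{Z}^{p}_{\mathrm{BF}_{k}}\!\left(M^{\left(n\right)}\right) = \abs{T_{p}}\,\abs{\Hom{T_{p}}{\slfrac{\Z}{k\Z}}},
\end{equation*}
obtained from the linking-form sum over $\check{T}^{q+1}\times\check{T}^{p+1}$, together with the cocycle-counting expression for $\mathcal{Z}^{p}_{\mathrm{TV}_{k}}\!\left(M^{\left(n\right)}\right)$ derived via the Universal Coefficient Theorem, which isolates a factor of $\abs{\Hom{T_{p}}{\slfrac{\Z}{k\Z}}}$ multiplied by an alternating quotient of $\abs{\Hom{F_{j}}{\slfrac{\Z}{k\Z}}}$ terms and an additional $k^{\pm 1}$ whose sign depends on the parity of $p$.

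Next, I would form the quotient $\mathcal{Z}^{p}_{\mathrm{BF}_{k}}/\mathcal{Z}^{p}_{\mathrm{TV}_{k}}$: the common factor $\abs{\Hom{T_{p}}{\slfrac{\Z}{k\Z}}}$ cancels immediately, leaving $\abs{T_{p}}$ times an alternating product in the $\abs{\Hom{F_{j}}{\slfrac{\Z}{k\Z}}}$. To uniformize the two parity cases and absorb the stray $k^{\pm 1}$, I would use that $M^{\left(n\right)}$ is connected, so $F_{0}=\Z$ and $\abs{\Hom{F_{0}}{\slfrac{\Z}{k\Z}}}=k$. Substituting this value inside the TV formula folds the explicit $k^{\pm 1}$ into the $F_{0}$-slot of the alternating product, and the even-$p$ and odd-$p$ expressions collapse to a single quotient whose leading factor is $\abs{T_{p}}/\abs{\Hom{F_{p}}{\slfrac{\Z}{k\Z}}}$.

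Finally, I would rearrange the remaining $\abs{\Hom{F_{j}}{\slfrac{\Z}{k\Z}}}$ factors so that the odd-indexed Hom groups sit in the numerator and the even-indexed ones in the denominator, matching the pattern $\abs{\Hom{F_{p-1}}{\slfrac{\Z}{k\Z}}}\hdots/\abs{\Hom{F_{p-2}}{\slfrac{\Z}{k\Z}}}\hdots$ announced in the statement, and multiply both sides by $\mathcal{Z}^{p}_{\mathrm{TV}_{k}}(M)$. I do not expect a genuine conceptual obstacle here: all the homological input has already been packaged in the preceding propositions. The main bookkeeping hazard is the parity split in the TV expression, and the delicate point is precisely the use of connectedness to make the $k$ coming from $F_{0}$ cancel the residual $k^{\pm 1}$ so that the final formula is genuinely parity-independent.
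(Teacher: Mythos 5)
Your plan is correct and matches the paper's intended argument: the proposition is stated as a direct consequence of the two earlier propositions, namely $\mathcal{Z}^{p}_{\mathrm{BF}_{k}}\left(M^{\left(n\right)}\right)=\abs{T_{p}}\abs{\Hom{T_{p}}{\slfrac{\Z}{k\Z}}}$ and the $T_{p}$-form of the TV formula, with the common factor $\abs{\Hom{T_{p}}{\slfrac{\Z}{k\Z}}}$ cancelling in the quotient and connectedness (so $F_{0}\cong\Z$ and $\abs{\Hom{F_{0}}{\slfrac{\Z}{k\Z}}}=k$) absorbing the parity-dependent $k^{\pm 1}$, exactly as you describe. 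One wording slip only: the surviving factors are indexed relative to $p$ (with $F_{p-1},F_{p-3},\hdots$ in the numerator and $F_{p},F_{p-2},\hdots$ in the denominator), so whether the numerator carries odd- or even-indexed groups depends on the parity of $p$, but your computation otherwise coincides with the paper's.
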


\begin{remark}
We recover the case $n = 3$ and $p = 1$: 
\begin{equation}
\label{BF=TV}
\mathcal{Z}^{1}_{\mathrm{BF}_{k}}\left(M^{\left(3\right)}\right)
= \frac{\abs{T_{1}}}{\abs{\Hom{F_{1}}{\slfrac{\Z}{k\Z}}}}
\mathcal{Z}^{1}_{\mathrm{TV}_{k}}\left(M^{\left(3\right)}\right)
\end{equation}
\end{remark}

\begin{remark}
The appearance of the $\Hom{F_{p}}{\slfrac{\Z}{k\Z}}$ may remind us of the ghost, antifield and antighost sectors that appear in Batalin-Vilkovisky formalism \cite{MNE2019}. An analysis of the normalizing part \eqref{Normalization} through this formalism may result in some factors that would cancel the $\Hom{F_{p}}{\slfrac{\Z}{k\Z}}$ factors in \eqref{BF=TV}, leading to a better agreement between the Abelian BF and TV theories.
\end{remark}

Let us try to give a quick overview of the significance of this result. In the $\mathrm{SU}\!\left(2\right)$ 3D case, the TV construction relying on $\mathcal{U}_{q}\!\left(\mathfrak{sl}_{2}\!\left(\mathbb{C}\right)\right)$ is regarded as a regularization of the $\mathrm{SU}\!\left(2\right)$ BF partition function\footnote{It is often said they are equal, but the point is that the BF partition function is not a well-defined object, so the equality does not really make sense, \textit{stricto sensu}.}. 

In the $\mathrm{U}\!\left(1\right)$ 3D case, we are able to propose a non-perturbative definition of the partition function on the one hand, and we are able to build an Abelian realization of a spherical category depending on a ``cutting parameter'' $k$, and to use it to build an Abelian TV invariant on the other hand. The two quantities obtained independently turn out to be equal up to a factor that has a cohomological interpretation. 

We are able to generalize these Abelian BF and TV constructions to a family of constructions in any dimension. Those constructions are still related, again up to a factor that has a cohomological interpretation.

Remark that formula \eqref{BF=TV} is an equality between sums of exponentials of two different bilinear forms over two different lattices. In \cite{DT2007}, Florian Deloup and Vladimir Turaev showed a similar equality in a purely algebraic manner, and they also ended up with a factor between the two sums, that can be interpreted as a normalization by the volume of each lattice.


\subsection*{Acknowledgments} 

Ph. M. thanks Pavel Mn{\"e}v, Stephan Stolz, Liviu Nicolaescu, Christopher Schommer-Pries, Justin Beck and Eugene Rabinovich from the Department of Mathematics of the University of Notre Dame, Vivek Shende and Konstantin Wernli from the Center for Quantum Mathematics of Southern Denmark University, S{\'e}bastien Leurent from the Institut de Math{\'e}matiques de Bourgogne of the Universit{\'e} de Bourgogne, Dominique Manchon and Abel Lacabanne from the Laboratoire Blaise Pascal of the Universit{\'e} Clermont-Auvergne, for their interest and the useful remarks and suggestions they made in order to improve the presentations of this work. Finally, Ph. M. thanks Ralph Kaufmann from the Department of Mathematics of Purdue University for his invitation to the AMS Special Session on Higher Structures in Topology, Geometry and Physics in March 2022.

\bibliographystyle{amsalpha}

\end{document}